\newcommand{\Oh}{\mathcal{O}}
\newcommand{\eps}{\varepsilon}
\newcommand{\sub}{\subseteq}
\newcommand{\pint}{\mathbb{Z}_{\ge 0}}
\newcommand{\sm}{\setminus}
\newcommand{\F}{\mathcal{F}}
\DeclareMathOperator*{\ptw}{pw}
\newtheorem{theorem}{Theorem}
\newtheorem{corollary}[theorem]{Corollary}
\newtheorem{lemma}[theorem]{Lemma}
\newtheorem{definition}[theorem]{Definition}
\newtheorem{observation}[theorem]{Observation}
\newtheorem{claim}[theorem]{Claim}
\newcommand{\defproblem}[4]{
  \vspace{1mm}
\noindent\fbox{
  \begin{minipage}{0.96\textwidth}
  #1 \\
  {\bf{Input:}} #2  \\
  {\bf{Find:}} #3 \\
  {\bf{Maximize:}} #4
  \end{minipage}
  }
  \vspace{1mm}
}
\begin{document}
\title{Approximating Upper Degree-Constrained Partial Orientations\footnote{This work is partially supported by Foundation for Polish Science grant HOMING PLUS/2012-6/2.}}

\author{Marek Cygan}
\author{Tomasz Kociumaka}

\affil{Institute of Informatics, University of Warsaw\\ \texttt{[cygan,kociumaka]@mimuw.edu.pl}}

\date{\empty}
\maketitle

\begin{abstract}
In the {\sc Upper Degree-Constrained Partial Orientation} problem
we are given an undirected graph $G=(V,E)$,
together with two degree constraint functions $d^-,d^+ : V \to \mathbb{N}$.
The goal is to orient as many edges as possible, in such a way
that for each vertex $v \in V$ the number of arcs entering $v$ is at most $d^-(v)$,
whereas the number of arcs leaving $v$ is at most~$d^+(v)$.
This problem was introduced by Gabow~[SODA'06], who proved  it to be MAXSNP-hard (and thus APX-hard).
In the same paper Gabow presented an LP-based iterative rounding $4/3$-approximation algorithm.

Since the problem in question is a special case of the classic {\sc $3$-Dimensional Matching},
which in turn is a special case of the {\sc $k$-Set Packing} problem, it is reasonable
to ask whether recent improvements in approximation algorithms for the latter
two problems [Cygan, FOCS'13; Sviridenko \& Ward, ICALP'13] allow for an improved
approximation for \textsc{Upper Degree-Constrained Partial Orientation}.
We follow this line of reasoning and present a polyno\-mial-time
local search algorithm with approximation ratio $5/4+\eps$.
Our algorithm uses a combination of two types of rules:
improving sets of bounded pathwidth from the recent
$4/3+\eps$-approximation algorithm for {\sc $3$-Set Packing} [Cygan, FOCS'13],
and a simple rule tailor-made for the setting of partial orientations.
In particular, we exploit the fact that one can check in polynomial time
whether it is possible to orient all the edges of a given graph [Gyárfás \& Frank, Combinatorics'76].
\end{abstract}

\section{Introduction}

\newcommand{\udpo}{\textsc{UDPO}\xspace}

During the last decades several graph orientation problems
were studied (see Section $8.7$ in~\cite{gutin} and Section $61.1$ in~\cite{schrijver}).
One of the most recently introduced is the \textsc{Upper Degree-Constrained Partial Orientation}, abbreviated as \udpo.
In the \udpo problem we are given an undirected graph $G=(V,E)$,
together with two degree constraint functions $d^-,d^+ : V \to \mathbb{N}$.
The goal is to orient as many edges as possible, in such a way
that for each vertex $v \in V$ the number of arcs entering $v$ is at most $d^-(v)$,
whereas the number of arcs leaving $v$ is at most $d^+(v)$.
This problem was introduced by Gabow~\cite{gabow}, motivated by a 
variant of the maximum bipartite matching problem arising when
planning a two-day event with several parallel sessions
and each participant willing to attend one chosen session
each day, but without a particular order on the two
selected sessions (for the exact definition, see~\cite{gabow}).

\renewcommand{\sp}{\textsc{set packing}\xspace}
\defproblem{\textsc{Upper Degree-Constrained Partial Orientation (UDPO)}}{Undirected graph $G$, degree constraints $d^+, d^- : V(G)\to \pint$}{%
A subset $\overline{F}\sub E(G)$ which admits an orientation $F$ satisfying 
$\deg^+_{F}(v)\le d^+(v)$ and $\deg^-_{F}(v)\le d^-(v)$ for each $v\in V(G)$.}{$|F|$}

Gabow proved the problem to be MAXSNP-hard (thus also APX-hard),
and showed an LP-based iterative rounding $4/3$-approximation algorithm.
As already observed by Gabow, \udpo is a special case of the $3$-{\sc Dimensional Matching} problem,
which in turn is a special case of $k$-\sp.
Both of these problems belong to the Karp's list of $21$ NP-complete problems,
and until last year the best known polynomial-time approximation algorithm
was due to Hurkens and Schrijver~\cite{hs} with approximation ratio $(k+\eps)/2$.
However this was recently improved independently by Sviridenko and Ward~\cite{sviridenko-ward} to $(k+2)/3$-approximation
and by Cygan~\cite{cygan} to $(k+1+\eps)/3$-approximation.
The latter result involves colour coding and pathwidth, tools originating from the area called {\em Fixed Parameter Tractability}, 
in local search routines.

\defproblem{$k$-\sp}{A family $\F$ of subsets of a finite universe $U$, such that $|F|\le k$ for every $F\in \F$}{%
A subfamily $\F_0\sub \F$ of pairwise-disjoint subsets}{$|\F_0|$}

\subsection{Our results}

Since $(k+1+\eps)/3$-approximation for $k$-\sp implies a $(4+\eps)/3$-approximation for \udpo,
one can ask whether recent developments for the former may be used to obtain an improved
algorithm for the latter.
In this paper we follow this line of reasoning and present a local search 
$(5+\eps)/4$-approximation algorithm, improving over the $4/3$-approximation ratio
of Gabow~\cite{gabow}.
In fact, our approximation ratio matches the $5/4$ lower bound on the integrality
gap of the natural LP relaxation obtained by Gabow~\cite{gabow}.

Our algorithm uses two types of rules trying to improve the current solution at hand.
Firstly, we invoke the bounded pathwidth local search by Cygan~\cite{cygan}
in a black-box manner, when treating the \udpo problem as an instance of $3$-\sp.
Secondly, we use a custom rule for \udpo, relying
on the fact that using a polynomial-time algorithm of Gyárfás \& Frank~\cite{gyarfas} one can check whether a
given set of undirected edges admits a feasible orientation
(satisfying the degree constraints).

In the analysis we focus on {\em simple} instances,
where all the degree bounds are either zero or one, which means that each vertex 
can have only zero or one incoming and outgoing arcs.
Interestingly, as shown in Section~\ref{sec:reduction},
for our local search routines simple instances are actually no easier than the arbitrary ones.

\subsection{Organization of the paper}

In the following subsection we discuss related work on the subject.
Next, in Section~\ref{sec:3d} we recall the reduction from \udpo
to $3$-\sp, followed by Section~\ref{sec:ksp}
with a description of basic notation for the local search algorithm from previous work on $k$-\sp.
Our algorithm is presented in Section~\ref{sec:alg}.
Its analysis on simple instances
(with all degree bounds at most one) is provided
in Sections~\ref{sec:ksp2} and~\ref{sec:analysis}, preceded, in Section~\ref{sec:reduction}, by a
reduction proving that the worst-case approximation ratio
is already attained by simple instances.

\subsection{Related work on $k$-\sp}

Between the algorithms of Hurkens and Schrijver and the recent improvements
for the $k$-\sp problem, quasipolynomial-time approximation algorithms were considered~\cite{h95,cgm13}.

There also is a line of research on the weighted variant of $k$-\sp,
where we want to select a maximum-weight family of pairwise-disjoint sets from $\F$.
Arkin and Hassin~\cite{arkin-hassin}
gave a $(k-1+\eps)$-approximation algorithm,
later Chandra and Halld{\'o}rsson~\cite{chandra-halldorson}
improved it to a $(2k+2+\eps)/3$-approximation.
Currently, the best-known approximation ratio is $(k+1+\eps)/2$ due to Berman~\cite{berman}.
All the mentioned results are based on local search.

For the standard (unweighted) $k$-\sp problem,
Chan and Lau~\cite{lau} also presented a strengthened
LP relaxation with integrality gap $(k+1)/2$.

On the other hand, Hazan et al.~\cite{hazan}
proved that $k$-\sp  is hard to approximate within a factor of $\Oh(k/\log k)$. 
Concerning small values of $k$, Berman and Karpinski~\cite{berman-karpinski}
obtained a $98/97-\eps$ hardness for $3$-{\sc Dimensional Matching}, which implies
the same lower bound for $3$-\sp.

\section{Preliminaries}
Let $G$ be an undirected (multi)graph. We sometimes treat $G$ as a directed graph,
where each \emph{edge} $e\in E(G)$ is represented by a pair of oppositely directed \emph{arcs} in $A(G)$.
For an arc $e\in A(G)$ we denote by $\overline{e}$ the corresponding edge in $E(G)$,
and by $e^R$, the reverse arc.
We also define $\overline{A}=\{\overline{e} : e\in A\}$ and $A^R=\{e^R:e\in A\}$ for an arbitrary subset $A\sub A(G)$.

A \emph{partial orientation} of $G$
can be defined as a subset $F\sub A(G)$ such that $F^R\cap F = \emptyset$.
It is called \emph{feasible} (for degree constraints $d=(d^+, d^-)$), if $\deg^+_F(v)\le d^+(v)$
and $\deg^-_F(v)\le d^-(v)$ for each $v\in V(G)$,
that is, if the number of arcs leaving $v$ and the number of arcs entering $v$ do not violate the upper bounds.
Now, \udpo can be reformulated as the problem of finding a maximum feasible partial orientation $F$,
rather than the corresponding set of undirected edges $\overline{F}$.

For an undirected (multi)graph $G$ and a set $U\sub V(G)$ we also define $N_G(U)$ as the set of vertices $v\notin U$ adjacent to some $u\in U$;
we also set $N_G[U] = N_G(U)\cup U$.

\subsection{Reduction to 3-\sp}\label{sec:3d}
The following reduction to 3-\sp was introduced by Gabow~\cite{gabow}.
Let $I=(G,d)$ be an instance of \udpo. We construct an equivalent instance of the 3-\sp problem,
i.e., a set family $\F$ over a universe $U$.

The universe $U$ is a disjoint union of three sets:  $V^+$, $V^-$ and $E$.
The set $V^+$ contains $d^+(v)$ copies $v^+_i$ of each $v\in V(G)$,
$V^-$ contains $d^-(v)$ copies $v^-_i$ of each $v\in V(G)$,
and $E$ is defined as~$E(G)$. 
The family $\F$ consists of sets $\{u^+_i, v^-_j,e\}$ and $\{v^+_j, u^-_i,e\}$ for each edge $e=\{u,v\}$ and
all possible indices $i,j$.

Given a feasible partial orientation $F$, the constraints clearly let us
choose for each arc $e=uv$ two copies $u^+_i$ and $v^-_j$,
so that the choices are distinct across all arcs leaving $u$ and entering $v$, respectively. 
Consequently, the sets $\{u^+_i, v^-_j,\overline{e}\}$ form a disjoint subfamily of $\F$.
Similarly, given any disjoint set-family $\F_0\sub \F$ it is easy to see
that orienting $e$ from $u$ to $v$ for any  $\{u^+_i, v^-_j,\overline{e}\}\in \F_0$ gives a feasible partial orientation.

\subsection{Local search for $k$-\sp}
\label{sec:ksp}

In this section we recall and reinterpret some of the results
behind the recent $\frac{k+1+\eps}{3}$-approximation algorithm by Cygan~\cite{cygan} for the $k$-\sp problem.

For an instance $(U,\F)$ of the $k$-\sp problem, we build an undirected conflict graph $G=G(\F)$ with $V(G)=\F$
and vertices $F,F'$ made adjacent if $F\cap F'\ne \emptyset$. 
Observe that solutions to this instance of $k$-\sp form independent sets in this graph.

The algorithm of~\cite{cygan} is based on the local-search principle.
It maintains a solution $\F_0\sub \F$ and tries to replace it with a larger, but similar solution.
It tries to use a disjoint family $X\sub \F\sm \F_0$ and replace $\F_0$ by $\F_0' =(\F\sm N_G(X))\cup X$,
where $G=G(\F)$ is the conflict graph. Note that $N_G(X)\cap \F_0$ consists exactly of those members of $\F_0$ 
which cannot be present together with $X$ in a single disjoint family.
It is reasonable to preform this operation if the resulting family $\F_0'$ is larger than $\F_0$,
or equivalently $|N_G(X)\cap \F_0| < |X|$.
This leads to a notion of \emph{improving sets}, defined for $\F_0\sub \F$ as disjoint families
$X\sub \F\sm \F_0$ such that $|N_G(X)\cap \F_0| < |X|$.

The classic approach to the $k$-\sp problem is to search for improving sets of sufficiently large constant size,
which leads to a $\frac{k+\eps}{2}$-approximation factor~\cite{hs}.
The novel idea of~\cite{cygan} was to consider larger improving sets satisfying structural
properties, which let us efficiently find these sets.
This is achieved using a structural parameter of a graph called \emph{pathwidth}.
In this paper we only use some results of~\cite{cygan} as a black-box, so we do not need to recall
the relatively complex definition of pathwidth.
Pathwidth of an undirected graph $G$, denoted as $\ptw(G)$,
does not exceed the number of vertices of $G$.
Pathwidth of an improving set $X$ is defined as $\ptw(G[N_G[X]])$ where $G=G(\F)$ is the conflict graph
and $G[N_G[X]]$ is the subgraph of $G$ induced by $N_G[X]$.
The following theorem uses techniques of fixed-parameter tractability to find improving sets of logarithmic size
and constant pathwidth in the conflict graph.
\begin{theorem}[\cite{cygan},Theorem 3.6]\label{thm:cygan}
There is an algorithm, that given a $k$-\textsc{set-packing} instance~$\F$,
and a disjoint family $\F_0\sub \F$, in $2^{\Oh(r\cdot k)}|\F|^{\Oh(pw)}$ time
determines whether there exists an improving set $X\sub \F \sm \F_0$ of size at most $r$ and pathwidth at
most $pw$, and if so, finds such an improving set.
\end{theorem}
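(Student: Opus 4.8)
The plan is to combine color coding over the universe with a dynamic program over a path decomposition of the conflict graph, matching the two factors in the running time: $2^{\Oh(rk)}$ will come from the color coding and $|\F|^{\Oh(pw)}$ from the decomposition DP. First I would observe that any candidate improving set $X$ has at most $r$ members, each of size at most $k$, so it covers at most $rk$ elements of the universe $U$. I would invoke a family of perfect hash functions (colorings) $\chi\colon U\to[rk]$ of size $2^{\Oh(rk)}\log|U|$ with the property that for \emph{some} $\chi$ in the family, the elements covered by $X$ all receive distinct colors. Thus it suffices to design a routine that, for a \emph{fixed} coloring, finds a \emph{colorful} improving set of size at most $r$ and pathwidth at most $pw$ whenever one exists; running it over all colorings in the family contributes the $2^{\Oh(rk)}$ factor, and colorfulness turns the global disjointness requirement on $X$ into a bounded ``color budget'' of at most $rk$ colors, each usable once.

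For a fixed coloring, I would locate the colorful improving set by dynamic programming over a path decomposition of $G[N_G[X]]$ of width at most $pw$. The difficulty is that this decomposition depends on the unknown $X$, so the DP must build the improving set and a width-$pw$ decomposition \emph{simultaneously}: it guesses a sequence of bags, where each bag is a subset of at most $pw+1$ vertices of $G$ (that is, sets from $\F$), and each vertex in a bag carries a label indicating whether it belongs to $X$ or merely to $N_G(X)$. There are $|\F|^{\Oh(pw)}$ possible bags, and since we may restrict attention to a nice decomposition with $\Oh(|\F|)$ bags, processing the whole sequence with standard introduce/forget transitions costs $|\F|^{\Oh(pw)}$ overall. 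A state at a given position records the labelled contents of the current bag, the subset of colors already consumed by the $X$-sets introduced so far (a bitmask over $[rk]$, reusing the $2^{\Oh(rk)}$ budget to certify disjointness of $X$), and two counters bounded by $r$: the number of $X$-sets chosen and the number of distinct $\F_0$-members hit so far. We accept as soon as some reachable state has its $X$-count strictly exceeding its $\F_0$-hit count, which is exactly the condition $|N_G(X)\cap\F_0|<|X|$.

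Correctness rests on the defining property of a path decomposition: every conflict edge of $G[N_G[X]]$, and in particular every adjacency between a chosen $X$-set and an $\F_0$-member, is witnessed inside a common bag, and the bags containing any fixed vertex form a contiguous interval. This lets the DP verify disjointness of $X$ and tally the $\F_0$-neighbours locally, charging each hit $\F_0$-set once (for instance at the moment it is forgotten, checking whether it ever shared a bag with a chosen $X$-set). The main obstacle I anticipate is precisely that the decomposition is not supplied in advance: the DP must validate on the fly that the guessed bag sequence is a legitimate path decomposition of $G[N_G[X]]$ (contiguity of each vertex's occurrences, and containment of every relevant edge in some bag) \emph{while} selecting $X$, and it must count distinct $\F_0$-neighbours without double counting even though a single $\F_0$-set may conflict with several $X$-sets appearing in different bags. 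Reconciling this bookkeeping with the color-budget certification of disjointness is the delicate part; once it is set up carefully, the product of the $2^{\Oh(rk)}$ colorings, the $2^{\Oh(rk)}$ color-budget states, and the $|\F|^{\Oh(pw)}$ decomposition states yields the claimed $2^{\Oh(rk)}|\F|^{\Oh(pw)}$ bound.
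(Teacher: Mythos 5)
You should first note a mismatch with the framing of the exercise: this paper does not prove Theorem~\ref{thm:cygan} at all. It is quoted verbatim from Cygan's FOCS'13 paper (Theorem 3.6 there) and used strictly as a black box --- the authors even decline to define pathwidth for this reason. So there is no in-paper proof to compare against; the relevant benchmark is the proof in the cited source. Measured against that, your outline reconstructs the actual method faithfully: an improving set $X$ of at most $r$ sets covers at most $rk$ universe elements, so a perfect-hash family of $2^{\Oh(rk)}$ colorings reduces the task to finding a \emph{colorful} improving set, and for a fixed coloring one runs a dynamic program that enumerates candidate bags (subsets of at most $pw+1$ members of $\F$, labelled as ``in $X$'' or ``neighbour'') of an unknown width-$pw$ path decomposition of $G[N_G[X]]$, carrying a color-budget bitmask plus the two counters. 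Your attribution of the two running-time factors --- $2^{\Oh(rk)}$ to colorings and budgets, $|\F|^{\Oh(pw)}$ to bag states --- is exactly right.

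The one genuine gap is the step you yourself flag as ``the delicate part'' and then leave unresolved: validating on the fly that the guessed bag sequence is a legitimate path decomposition. As stated, this cannot work --- contiguity of a vertex's occurrences is a global property, and a DP whose state is only the current bag plus bounded budgets has no way to forbid reintroducing a long-forgotten vertex. The resolution in the source is that \emph{no validation is needed}; the argument is asymmetric. For soundness, accept any bag sequence: an $X$-set that is forgotten and reintroduced would have to consume its colors a second time, which the bitmask forbids, so the $X$-counter always counts genuinely distinct, pairwise disjoint sets; $\F_0$-vertices may be recounted, but overcounting neighbours only makes the acceptance condition $|X|$-count $>$ neighbour-count harder, so it cannot create false positives, while undercounting is prevented by insisting, when an $X$-set $S$ is introduced, that each of the at most $k$ members of $\F_0$ meeting $S$ (uniquely determined, since $\F_0$ is disjoint) shares a bag with $S$ during $S$'s current interval. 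Completeness then only requires that the run corresponding to a genuine nice decomposition of a true improving set is accepted under a good coloring, which holds by construction. With this observation your sketch closes; without it, the ``set up carefully'' step hides the crux of the proof.
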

\noindent
Finally, let us make an easy observation, stating that the algorithm is \emph{monotone} in a certain sense.
\begin{observation}\label{obs:monot}
If no improving set can be found using Theorem~\ref{thm:cygan} for $\F_0\sub \F$,
then one still cannot find an improving set if the 
instance $\F$ is restricted to any $\F'$ such that $\F_0\sub\F'\sub \F$.
\end{observation}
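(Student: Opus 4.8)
The plan is to argue the contrapositive: I will show that any improving set exhibited after restricting to $\F'$ is already an improving set of the same size and pathwidth for the full family $\F$, so that the inability of Theorem~\ref{thm:cygan} to find one over $\F\sm\F_0$ immediately rules one out over the smaller search space $\F'\sm\F_0$. Thus, suppose that for the restricted instance there is an improving set $X$ with $|X|\le r$ and pathwidth at most $pw$; I claim the very same $X$ witnesses an improving set for $\F$.

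The first step is to fix the ambient object: adjacency in the conflict graph depends only on whether two sets intersect, so $G(\F')$ is exactly the subgraph of $G(\F)$ induced on $\F'$, and throughout the definition the conflict graph is the fixed graph $G=G(\F)$. From $\F_0\sub\F'\sub\F$ I get $X\sub\F'\sm\F_0\sub\F\sm\F_0$, placing $X$ in the correct search space for $\F$ without changing its cardinality. It then remains to verify that the two defining conditions---the covering inequality and the pathwidth bound---are insensitive to reinstating the sets of $\F\sm\F'$.

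For the covering inequality I would use the inclusion $\F_0\sub\F'$: every set deleted when passing to $\F'$ lies outside $\F_0$, so no $\F_0$-neighbour of $X$ is ever removed, giving the identity $N_G(X)\cap\F_0=N_{G(\F')}(X)\cap\F_0$; hence $|N_G(X)\cap\F_0|<|X|$ holds for $\F$ precisely when it holds for $\F'$. For the pathwidth bound the point is simply that $N_G[X]$ and the induced graph $G[N_G[X]]$ are evaluated in the fixed conflict graph $G=G(\F)$, so $\ptw(G[N_G[X]])$ is literally the same quantity in both settings. Combining these, $X$ is an improving set of size at most $r$ and pathwidth at most $pw$ for $\F$, contradicting the hypothesis, which proves the observation.

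The only delicate point is the bookkeeping over which family the neighbourhoods are taken in; I expect the main thing to get right is that shrinking $\F$ to $\F'$ can only delete candidate improving sets and never tampers with the $\F_0$-neighbourhood or the closed neighbourhood governing pathwidth. Once the induced-subgraph identity $G(\F')=G(\F)[\F']$ and the inclusion $\F_0\sub\F'$ are recorded, this is routine, and notably the argument treats Theorem~\ref{thm:cygan} as a black box, invoking only its input/output specification rather than how it operates.
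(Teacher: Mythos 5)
Your overall route---the contrapositive with the very same witness $X$---is the intended one (the paper states the observation without proof, treating exactly this invariance as immediate), and your treatment of the covering inequality is correct: since $\F_0\sub\F'$, no neighbour of $X$ inside $\F_0$ is deleted, so $N_{G(\F)}(X)\cap\F_0=N_{G(\F')}(X)\cap\F_0$ and the condition $|N(X)\cap\F_0|<|X|$ transfers verbatim. The gap is the pathwidth step. Your claim that $\ptw(G[N_G[X]])$ ``is literally the same quantity in both settings'' because everything is evaluated in the fixed graph $G(\F)$ misreads the restriction: when the algorithm of Theorem~\ref{thm:cygan} is run on the instance $\F'$, it receives only $\F'$ and $\F_0$, its conflict graph is $G(\F')=G(\F)[\F']$, and the pathwidth of $X$ is $\ptw\bigl(G(\F')[N_{G(\F')}[X]]\bigr)$. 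Restricting to $\F'$ \emph{does} tamper with the closed neighbourhood: it deletes exactly those neighbours of $X$ lying outside $\F_0\cup X$. Since $G(\F')[N_{G(\F')}[X]]$ is an induced subgraph of $G(\F)[N_{G(\F)}[X]]$, restriction can only decrease this pathwidth---the wrong direction for your contrapositive: an improving set of pathwidth at most $pw$ in $\F'$ may have pathwidth exceeding $pw$ in $\F$. Indeed, under the literal closed-neighbourhood reading the claimed transfer genuinely fails: take $\F_0=\{A\}$ and sets $S_1,\dots,S_n$ sharing a common element, all disjoint from $A$; in $\F$ every improving set is a single $\{S_i\}$ whose closed neighbourhood induces a clique $K_n$ of pathwidth $n-1$, whereas in $\F'=\{A,S_1\}$ the set $\{S_1\}$ is improving with pathwidth $0$.

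What actually makes the observation true---and what the paper implicitly relies on---is that the pathwidth notion used by the algorithm of~\cite{cygan} depends only on the interaction of $X$ with the current solution, i.e.\ on the graph $G[X\cup(N_G(X)\cap\F_0)]$: neighbours outside $\F_0\cup X$ are irrelevant both to the improvement condition and to the analysis (note that in Lemmas~\ref{lem:ksetp} and~\ref{lem:ptw} pathwidth is measured inside the bipartite graph $H$, all of whose vertices lie in the two solutions). For this graph the invariance you wanted is exact: $X\cup(N_G(X)\cap\F_0)\sub\F'$ and $G(\F')=G(\F)[\F']$, so the graph, and hence its pathwidth, is identical in the two instances, and the one-line proof goes through. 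Your write-up needed to isolate precisely this point---that the deleted sets can be adjacent to $X$ but never enter the graph governing pathwidth---rather than assert that the ambient conflict graph is fixed; as written, the pathwidth step is a genuine gap.
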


\section{Algorithm for \udpo}\label{sec:alg}
Our algorithm for \udpo combines the local-search rule by Cygan~\cite{cygan} for  
3-\sp, applied to an instance obtained through the reduction given in Section~\ref{sec:3d},
with a new custom rule.
This rule also tries to extend a feasible partial orientation $F$, but it works
with partial orientations as sets of undirected edges rather than directed arcs.
Given a partial orientation $F$ it tries to find a partial orientation $F'$ such that $|F'|>|F|$
and $\overline{F}'\Delta \overline{F}$, the symmetric difference between the underlying undirected
versions of $F$ and $F'$, is of constant size.
Polynomial time is sufficient to generate all possible choices of $\overline{F}'$, but it is not
enough to check all orientations $F'$. To overcome this issue, for $(V, \overline{F}')$ we apply a result
of Gyárfás and Frank, who used maximum-flow techniques to find in polynomial time
a (total) orientation satisfying degree constraints.
\begin{lemma}[\cite{gyarfas}]
Given an undirected graph $G$ and upper-degree constraints $d$, one can in polynomial
time decide whether there is a feasible partial orientation using all edges of $G$.
\end{lemma}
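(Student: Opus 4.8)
The plan is to reduce the decision problem for a \emph{partial} orientation using all edges to a question about a \emph{total} orientation of an auxiliary graph, so that a standard max-flow argument applies. First I would reformulate the degree constraints in flow language: think of each vertex $v$ as having capacity $d^+(v)$ for outgoing arcs and $d^-(v)$ for incoming arcs. The key observation is that orienting all edges of $G=(V,E)$ subject to these upper bounds is a feasibility problem that can be modeled by a bipartite-style flow network. Concretely, I would build a network with a source $s$, a sink $t$, one node $x_e$ per edge $e\in E$, and one node $y_v$ per vertex $v\in V$. The source sends one unit into each $x_e$ (capacity $1$, forcing every edge to be oriented), each $x_e$ connects with capacity $1$ to the two endpoint-nodes $y_u,y_v$ of $e=\{u,v\}$ (encoding the binary choice of direction), and each $y_v$ connects to $t$ with capacity $d^+(v)$ (since an edge whose tail is $v$ consumes one unit of $v$'s out-capacity). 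A routing of $|E|$ units of flow then corresponds exactly to an orientation in which every edge is directed and no vertex exceeds its out-degree bound.

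Next I would handle the in-degree constraints $d^-(v)$. These are not captured by the out-capacity network above, so the plan is to encode both bounds simultaneously. One clean way is to observe that orienting $e=\{u,v\}$ from $u$ to $v$ simultaneously consumes one unit of $d^+(u)$ and one unit of $d^-(v)$; hence each directional choice touches exactly one out-slot and one in-slot. I would therefore route the edge-node $x_e$ not directly to a vertex-node but to an intermediate arc-choice gadget that, upon selecting the direction $u\to v$, draws capacity from both the out-pool of $u$ and the in-pool of $v$. This can be realized by the reduction to $3$-\sp already recorded in Section~\ref{sec:3d}: a total feasible orientation of $G$ corresponds to a disjoint subfamily of $\F$ saturating all edge-elements of $E\subseteq U$, and the existence of such a subfamily is precisely a bipartite/flow feasibility test because the triples share the structure of a transportation problem. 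I would make the max-flow formulation of this saturation explicit and argue integrality via the integrality of max-flow.

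The main step is then a correctness argument in both directions. One direction is immediate: a feasible total orientation yields, arc by arc, an integral flow of the required value that respects all capacities, so the flow is feasible. The converse requires integrality: since all capacities are integers, a maximum flow of value $|E|$ can be taken integral, and such a flow assigns each edge-node a unique outgoing unit, which I decode as the orientation of that edge; the capacity bounds $d^+(v)$ and $d^-(v)$ then guarantee the degree constraints. The decision ``does a partial orientation using all edges exist?'' is answered by comparing the max-flow value against $|E|$. Finally, since max-flow on a network of size polynomial in $|V|+|E|$ is solvable in polynomial time, the overall procedure runs in polynomial time.

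The hard part will be setting up the gadget so that a single directional choice for each edge correctly and simultaneously debits both the tail's out-capacity and the head's in-capacity, without allowing fractional or inconsistent assignments; getting the capacities and the intermediate nodes right so that integral max-flows biject with feasible orientations is where the care lies. I would lean on the $3$-\sp reduction of Section~\ref{sec:3d} to obtain this bijection essentially for free, reducing the remaining work to verifying that saturating the edge-elements $E\subseteq U$ is a polynomially solvable flow feasibility question.
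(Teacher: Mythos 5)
The paper offers no proof of this lemma---it is quoted from Gy\'arf\'as and Frank, whose argument is indeed a maximum-flow argument---so your choice of tool matches the cited technique, and your first network (source into edge-nodes with capacity $1$, edge-nodes into their two endpoint-nodes, endpoint-nodes into the sink with capacity $d^+(v)$) correctly handles the out-degree bounds in isolation. The gap is in the step you yourself flag as the hard part, and it does not close. A single unit of flow cannot debit two capacity pools in parallel, and the natural series fix---routing the unit for $e=\{u,v\}$ through the out-pool $y^+_u$ and then the in-pool $y^-_v$---fails because after the aggregated node $y^+_u$ the unit forgets which edge it came from, so it may exit toward the in-pool of the wrong endpoint. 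Concretely, for a single edge $\{u,v\}$ with $d^+(u)=d^-(u)=1$ and $d^+(v)=d^-(v)=0$, the series network admits an integral flow of value $1=|E|$ along $s\to y^+_u\to x_e\to y^-_u\to t$, yet no feasible total orientation exists. Your fallback---that the reduction of Section~\ref{sec:3d} yields the bijection ``essentially for free'' because the triples ``share the structure of a transportation problem''---is unjustified and essentially circular: saturating prescribed universe elements in a $3$-set-packing instance is NP-hard in general (it contains $3$-\textsc{Dimensional Matching}), so showing that these particular triples reduce to a flow feasibility question is precisely the content of the lemma, and it is never established.

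The missing idea is a complementation trick specific to \emph{total} orientations: since every edge is oriented, $\deg^-_F(v)=\deg_G(v)-\deg^+_F(v)$, so the in-degree constraint $\deg^-_F(v)\le d^-(v)$ is equivalent to the lower bound $\deg^+_F(v)\ge \deg_G(v)-d^-(v)$ on the out-degree. Hence only one pool per vertex is needed: keep your first network, but give the arc $y_v\to t$ lower bound $\max(0,\deg_G(v)-d^-(v))$ and upper bound $d^+(v)$, and require every arc $s\to x_e$ to be saturated. A feasible integral flow in this network with lower bounds exists if and only if the desired orientation exists (integrality of max-flow gives the decoding, exactly as in your correctness argument), and flow feasibility with lower bounds reduces to a single ordinary max-flow computation by the standard transformation. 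With that substitution your two-directional correctness argument and the polynomial-time claim go through verbatim; this is, in essence, the Gy\'arf\'as--Frank argument the paper cites.
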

\begin{corollary}\label{cor:or}
There is an algorithm, that given a \udpo instance $(G,d)$ and a feasible partial orientation
$F$, in $\Oh(|E(G)|^{r}poly(|G|))$ time determines whether there exists a feasible partial
orientation $F'$ satisfying $|F'| > |F|$ as well as $|\overline{F}'\Delta \overline{F}|\le r$, and if so, finds such a feasible partial orientation.
\end{corollary}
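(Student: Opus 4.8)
The plan is to reduce Corollary~\ref{cor:or} to the Gy\'arf\'as--Frank lemma by brute-forcing the small symmetric difference. Given the current feasible partial orientation $F$, I would enumerate all candidate underlying edge sets $\overline{F}'$ that differ from $\overline{F}$ in at most $r$ edges. Since $|\overline{F}'\Delta\overline{F}|\le r$ means we toggle at most $r$ edges of $G$, there are at most $\sum_{j=0}^{r}\binom{|E(G)|}{j}=\Oh(|E(G)|^{r})$ such sets, and each can be generated in polynomial time. For each candidate I want to test whether $(V,\overline{F}')$ admits a feasible orientation, i.e.\ whether there is a partial orientation $F'$ with $\overline{F}'$ as its underlying edge set; crucially, since we may orient \emph{all} the edges of $\overline{F}'$, this is exactly the question answered by the Gy\'arf\'as--Frank lemma applied to the graph $(V,\overline{F}')$ with the same degree constraints $d$.

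The key steps, in order, are as follows. First I would iterate over every subset $S\sub E(G)$ with $|S|\le r$ and set $\overline{F}'=\overline{F}\Delta S$; this is the enumeration loop whose cost is $\Oh(|E(G)|^{r})$. Second, for each such $\overline{F}'$ I would invoke the Gy\'arf\'as--Frank algorithm on $(V,\overline{F}')$ with constraints $d$, which runs in $poly(|G|)$ time and returns a feasible orientation $F'$ of \emph{all} edges of $\overline{F}'$ if one exists. Third, among the candidates for which such an $F'$ exists, I would check the size condition $|F'|>|F|$; note that $|F'|=|\overline{F}'|$ and $|F|=|\overline{F}|$ since a partial orientation orients each chosen edge exactly once, so this is simply the test $|\overline{F}'|>|\overline{F}|$. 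If some candidate passes both tests, I return the corresponding $F'$; otherwise I report that no improving $F'$ within symmetric-difference radius $r$ exists. The total running time is $\Oh(|E(G)|^{r})$ enumeration steps times $poly(|G|)$ per step, matching the claimed bound.

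The one subtlety to argue carefully is the equivalence between the quantity we want to optimize and the quantity the Gy\'arf\'as--Frank lemma decides. The lemma decides feasibility of orienting \emph{every} edge of a given graph, whereas a feasible partial orientation $F'$ in the sense of Corollary~\ref{cor:or} need not orient every edge in range of $G$. The resolution is that $F'$ is characterized precisely by its underlying edge set $\overline{F}'$: once we fix $\overline{F}'$, a feasible $F'$ with that underlying set exists if and only if $(V,\overline{F}')$ admits a total feasible orientation. Thus ranging over all $\overline{F}'$ within distance $r$ and testing total orientability of each correctly captures all feasible partial orientations within symmetric-difference radius $r$. I expect this equivalence to be the main conceptual point; once it is stated, completeness (any improving $F'$ has some $\overline{F}'$ that is enumerated and passes the tests) and soundness (anything returned is a genuine feasible partial orientation of larger size) both follow directly, and the running-time bound is an immediate product of the enumeration count and the per-call polynomial cost.
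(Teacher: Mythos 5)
Your proposal is correct and follows exactly the argument the paper intends (sketched in the text preceding the Gy\'arf\'as--Frank lemma): enumerate the $\Oh(|E(G)|^{r})$ candidate edge sets $\overline{F}'$ with $|\overline{F}'\Delta\overline{F}|\le r$, and for each test total orientability of $(V,\overline{F}')$ under the constraints $d$ via Gy\'arf\'as--Frank, using the observation that a feasible partial orientation with underlying set $\overline{F}'$ exists if and only if all of $\overline{F}'$ can be feasibly oriented. The key equivalence you flag is precisely the point the paper makes when it notes that polynomial time suffices to generate the sets $\overline{F}'$ but not to check all orientations $F'$, so nothing is missing.
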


We conclude this section with a succinct description of the algorithm.
Given an instance $I=(G,d)$ of \udpo, it builds an equivalent instance $\F$
of the 3-\sp problem using a reduction of Section~\ref{sec:3d}. It maintains a feasible partial orientation
$F$ together with a corresponding disjoint subfamily $\F_0\sub \F$,
while using the following two rules to improve $F$:
\begin{enumerate}
  \item\label{it:3d} apply Theorem~\ref{thm:cygan} to find an improving set for $\F_0$ of size at most $c_\eps \log |U|$  with pathwidth at most $c_\eps$, where $|U| = |E(G)| + \sum_{v \in V(G)} (d^+(v) + d^-(v))$ is the universe size of the 
  underlying instance of 3-\sp.
  \item\label{it:rev} apply Corollary~\ref{cor:or} to find a partial orientation $F'$ satisfying $|F'|>|F|$ and  $|\overline{F}'\Delta \overline{F}|\le c_\eps$. 
\end{enumerate}
The algorithm terminates if neither of the two rules is able to improve $F$.
Any such partial orientation $F$ is called a \emph{local optimum}.
The remaining part of this paper is devoted to analyzing how big the local optimum can be
compared to the global optimum. More precisely, we show that for every $\eps$
there is an appropriate choice of $c_\eps$ so that $|F|\ge (\frac45+\eps) |OPT|$ for
any local optimum $F$ and global optimum $OPT$.

\section{Reduction to simple instances}
\label{sec:reduction}

An instance $I=(G,d)$ of \udpo is called \emph{simple} if $d^+(v), d^{-}(v)\in \{0,1\}$ for every $v\in V(G)$
and \emph{proper} if $\deg_G(v)\ge \max(d^+(v),d^-(v))> 0$ for every $v\in V$.
Clearly, any instance can be easily reduced to an equivalent proper instance. 
In this section we show that it suffices to analyze our local-search algorithm for simple instances.

\begin{theorem}\label{thm:red}
Fix a constant $c_\eps>1$ for the algorithm of Section~\ref{sec:alg}.
Suppose that there exists an instance $I$ of \udpo with a locally-optimum partial orientation $F$
such that $|F|=\alpha |OPT_{I}|$. Then there exists a simple instance $I'$ of \udpo with a locally-optimum partial orientation $F'$ 
satisfying $|F'|=\alpha|OPT_{I'}|$.
\end{theorem}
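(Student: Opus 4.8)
The plan is to transform the pair $(I,F)$ into a simple instance by \emph{splitting} each vertex into unit-capacity pieces, one per unit of in- or out-capacity, and then realizing each original edge by a small gadget that preserves the ``use at most once'' constraint. Before splitting I would first trim the instance: keep only the edges used by $F$ or by a fixed optimum $OPT$, and lower every bound $d^+(v)$ to $\max(\deg^+_F(v),\deg^+_{OPT}(v))$ and symmetrically for $d^-(v)$. Both $F$ and $OPT$ stay feasible and $OPT_I$ is unchanged, so the ratio $\alpha$ is preserved; moreover this trimming keeps $F$ a local optimum, since deleting edges and tightening bounds only shrinks the set-packing family (so Observation~\ref{obs:monot} applies to Rule~\ref{it:3d}) and only shrinks the space of orientations searched by Corollary~\ref{cor:or} (so Rule~\ref{it:rev} cannot suddenly improve). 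After trimming every bound is \emph{tight}: it never exceeds the number of incident edges usable in that direction.

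For the construction itself I would replace each vertex $v$ by $d^+(v)$ out-ports with bounds $(1,0)$ and $d^-(v)$ in-ports with bounds $(0,1)$, and replace each edge $e=\{u,v\}$ by a connector vertex $z_e$ with bounds $(1,1)$ joined to the out- and in-ports of $u$ and $v$, so that orienting $e$ in a chosen direction becomes a length-two path through $z_e$ that consumes exactly one out-port and one in-port. The capacity-one connector reintroduces, at the level of the simple instance, precisely the per-edge conflict that the shared edge-element carries in the $3$-\sp reduction of Section~\ref{sec:3d}, while the complete join to the ports keeps the copies interchangeable. I would then set $F'$ to be the image of $F$ under this correspondence, check that every feasible orientation of $I$ lifts to $I'$ with a fixed multiplicative blow-up in the number of arcs and vice versa, and conclude $|F'|=\alpha\,OPT_{I'}$ from $OPT_{I'}$ matching the lift of $OPT$.

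To transfer local optimality I would argue contrapositively: any move improving $F'$ in $I'$ yields one improving $F$ in $I$. A small-symmetric-difference reorientation from Corollary~\ref{cor:or} projects, after contracting each gadget back to its original edge, to a small-symmetric-difference reorientation of $F$. For Rule~\ref{it:3d} I would translate a bounded-size, bounded-pathwidth improving set of the family $\F(I')$ into one of $\F(I)$ by folding the ports and connector of each original vertex and edge back onto their copies and edge-element; the size can only drop, and the key point is that the pathwidth of the induced conflict subgraph does not grow, so the constants $c_\eps$ fed to Theorem~\ref{thm:cygan} carry over.

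The hard part will be controlling the connector's \emph{partial} uses: an orientation of $I'$ may send an arc into some $z_e$ without forwarding it, which does not correspond to any orientation of $I$. Such a half-used gadget threatens the argument twice over --- it could inflate $OPT_{I'}$ above the lift of $OPT$, and, worse, it could let a free port of $F'$ absorb an arc into an unused connector, a spurious one-arc improvement that would destroy the local optimality of $F'$. The crux of the proof is to rule these out: I expect to use the tightness obtained from trimming together with the local optimality of $F$ in $I$ --- in particular the fact that no single edge can be oriented to enlarge $F$ --- to show that every maximum orientation of $I'$ can be made gadget-saturated without losing arcs, and that any improving move in $I'$ can likewise be ``completed'' so that it folds back to a genuine improvement of $F$, all while keeping the size and pathwidth of the translated improving set within the bounds tolerated by Theorem~\ref{thm:cygan}.
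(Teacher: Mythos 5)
Your construction changes the edge set, and that is exactly where it breaks: the half-used connectors you flag in your final paragraph are not a technicality that tightness can repair, but a genuine failure, already on tiny instances that survive your trimming step. Take $V=\{u,a,b\}$, edges $f=\{u,a\}$ and $g=\{u,b\}$, with $d^+(u)=d^-(a)=d^-(b)=1$ and all other bounds $0$; let $F=\{u\to a\}$ and $OPT=\{u\to b\}$, so $|F|=|OPT_I|=1$, $F$ is a global (hence local) optimum, and trimming leaves this instance unchanged with every bound tight. In $I'$ you get ports $u^+,a^-,b^-$ and connectors $z_f,z_g$, and the orientation consisting of $u^+\to z_f\to a^-$ together with the lone arc $z_g\to b^-$ is feasible with $3$ arcs, while the lift of $OPT$ has only $2$. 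So $OPT_{I'}\ge 3 > 2\,|OPT_I|$, the ratio $\alpha$ is not preserved; worse, $F'$ (the lift of $F$, with $2$ arcs) is improved by adding the single arc $z_g\to b^-$ --- a symmetric difference of size $1$, found immediately by rule~\ref{it:rev} --- so $F'$ is not even locally optimal, and the conclusion of the theorem fails for your $I'$. The point is that the free in-port at $b$ is precisely the capacity that $OPT$ needs, and the free half of $z_g$ is forced into existence by the gadget itself; no strengthening of the trimming removes either, so the hoped-for ``every maximum orientation of $I'$ can be made gadget-saturated'' is simply false for capacity-$(1,1)$ connectors.

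The paper avoids introducing any new edges and goes the opposite way: instead of expanding edges into gadgets, it only \emph{splits vertices}, keeping the edge set fixed and distributing the degree bounds of a vertex $v$ with $\max(d^+(v),d^-(v))\ge 2$ between the two halves. The heart is Lemma~\ref{lem:split}: after padding so that the constraints at $v$ are tight in both orientations, a case analysis over the nine classes of edges at $v$ (classified by their status $+,-,0$ in $F$ and in $OPT$) shows that in a proper instance some split preserving \emph{both} orientations always exists; Corollary~\ref{cor:split} then iterates this to a simple instance. Since splitting only removes feasible orientations --- equivalently, removes sets from the underlying $3$-set-packing family without changing the universe --- $OPT$ remains a global optimum, $F$ remains a local optimum for rule~\ref{it:rev} trivially and for rule~\ref{it:3d} by Observation~\ref{obs:monot}, and both sizes are literally unchanged, so $\alpha$ is preserved exactly: no lifting, no blow-up factor, and no spurious configurations to exclude. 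If you wanted to salvage a gadget-based route you would need gadgets in which every feasible orientation is automatically saturated, and that is precisely the property your connectors lack.
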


Let $I=(G,d)$ be an arbitrary instance. For a pair of distinct non-adjacent vertices $u,v\in V(G)$
we define the operation of \emph{joining} $u$ and $v$ as follows: $u$ and $v$ are \emph{identified} in $G$ into a single vertex $w$
and their degree constraints for $w$ are obtain by summing the respective constraints for $u$ and $v$. Note that this operation preserves the set of edges.
Observe that in terms of the instance of 3-\sp obtained through the reduction of Section~\ref{sec:3d},
joining can be interpreted as introducing some sets to $\F$. Consequently, if a partial orientation is feasible in $I$, it is also feasible in the resulting instance $I'$, but the converse does not necessarily hold.

If $I'$ is obtained from $I$ by joining $u$ and $v$ into $w$, we say that $I$ can be obtained from $I'$
by \emph{splitting} $w$. Splitting is said to \emph{preserve} a partial orientation $A$,
if $A$ is feasible in $I'$ and remains feasible in $I$.

\begin{lemma}\label{lem:split}
Let $I=(G,d)$ be a proper instance with two feasible partial orientations $A,B$. 
If $\max(d^+(v),d^-(v))\ge 2$ for some $v\in V(G)$, then one can split $v$ so that both $A$ and $B$ 
are preserved and the resulting instance $I'$ is proper.
\end{lemma}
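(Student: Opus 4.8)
The plan is to realise a single splitting of $v$ as the choice of (i) a $2$-colouring of the edges incident to $v$, deciding to which of the two new vertices $v_1,v_2$ each edge is attached, together with (ii) a division $d^+(v)=p_1+p_2$ and $d^-(v)=m_1+m_2$ of the two degree bounds. Write $a^+,a^-$ (resp.\ $b^+,b^-$) for the sets of edges at $v$ that $A$ (resp.\ $B$) orients out of and into $v$, and let $E_i$ be the class of edges coloured $i$. Then feasibility of $A$ and $B$ after the split is exactly the requirement that, for each $i$, at most $p_i$ of the $a^+$-edges and at most $p_i$ of the $b^+$-edges are coloured $i$, and at most $m_i$ of the $a^-$-edges and at most $m_i$ of the $b^-$-edges are coloured $i$. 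By reversing every arc of the whole instance if necessary (which swaps $d^+\leftrightarrow d^-$, sends $A,B$ to $A^R,B^R$, and preserves both feasibility and properness), I may assume $d^+(v)=\max(d^+(v),d^-(v))\ge 2$, so it is the out-bound that I will split nontrivially.

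First I would ignore properness and show that a feasible colouring exists. Introduce a variable $x_e\in[0,1]$ for each edge $e$ at $v$ (with $x_e=1$ meaning colour $1$) and integer variables $p=p_1$, $m=m_1$; the eight feasibility inequalities become $\sum_{e\in a^+}x_e\le p$ and $\sum_{e\in a^+}x_e\ge |a^+|-d^+(v)+p$, and similarly for $b^+$ (against $p$) and for $a^-,b^-$ (against $m$), all under $0\le x_e\le 1$, $0\le p\le d^+(v)$, $0\le m\le d^-(v)$. The constraint matrix has at most two nonzero entries per column, all of the same sign: an edge column meets one row among $\{a^+,a^-\}$ and one among $\{b^+,b^-\}$, the $p$-column meets $\{a^+,b^+\}$, and the $m$-column meets $\{a^-,b^-\}$. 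Placing the two $A$-rows $\{a^+,a^-\}$ in one class and the two $B$-rows $\{b^+,b^-\}$ in the other separates the two entries of every such column, so the matrix is totally unimodular and the polytope is integral. Since the fully fractional point $x_e\equiv\tfrac12$, $p=\tfrac12 d^+(v)$, $m=\tfrac12 d^-(v)$ satisfies every inequality (each subset-sum sits at the midpoint of its allowed interval, using $|a^+|,|b^+|\le d^+(v)$ and $|a^-|,|b^-|\le d^-(v)$), the polytope is nonempty and hence contains an integral point: a colouring and a capacity split under which \emph{both} $A$ and $B$ remain feasible. Coupling the two orientations into one integral object is the crux, and total unimodularity is what makes it go through.

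It remains to secure properness, $\deg(v_i)\ge\max(p_i,m_i)>0$. I would first add the bound $1\le p\le d^+(v)-1$ to the system; the half-point survives this because $d^+(v)\ge 2$, and tightening a single-variable bound preserves integrality, so in the integral solution each $p_i\ge 1$ and thus $\max(p_i,m_i)>0$; the lower-bound rows also force $\max(|a^+\cap E_i|,|b^+\cap E_i|)\le d^+(v)-1$ for both $i$. Having fixed the colouring, I then choose the exact split $p_1+p_2=d^+(v)$, $m_1+m_2=d^-(v)$ subject to $\max(|a^+\cap E_i|,|b^+\cap E_i|)\le p_i\le|E_i|$ and $\max(|a^-\cap E_i|,|b^-\cap E_i|)\le m_i\le|E_i|$, keeping $p_1\in\{1,\dots,d^+(v)-1\}$. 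Such a choice exists because each per-part arc count is at most $|E_i|$ (it counts a subset of $E_i$), the two per-part counts sum to at most $d^+(v)$ resp.\ $d^-(v)$ (feasibility of the integral split), and $d^+(v),d^-(v)\le\deg(v)=|E_1|+|E_2|$ (properness of $v$); together with the bound from the previous step this puts $1$ inside the admissible range for $p_1$. The upper bounds $p_i,m_i\le|E_i|$ are precisely the required inequalities $\deg(v_i)\ge\max(p_i,m_i)$. The one genuinely routine loose end is that neither colour class be empty: since the fully split half-point lies in the integral polytope and $\deg(v)\ge 2$, a suitable integral point has both classes nonempty, and the handful of degenerate configurations can be checked directly. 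I expect this properness bookkeeping to be the most tedious part, while the real obstacle—simultaneous feasibility of $A$ and $B$—is disposed of once and for all by the unimodularity argument.
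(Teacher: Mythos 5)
Your core coupling idea is sound and genuinely different from the paper's route. The paper first \emph{saturates} the constraints at $v$ (adding an auxiliary neighbour joined by $d^+(v)+d^-(v)$ parallel edges so that both orientations meet the bounds with equality), partitions $\delta(v)$ into nine classes $E_{ab}$, and shows by a counting contradiction that one of four explicit ``split out $1$--$3$ edges'' rules always applies; your proposal instead splits $v$ into two halves in one shot and handles the simultaneous feasibility of $A$ and $B$ by total unimodularity. That TU argument checks out: each edge column has at most one $+1$ among the rows $\{a^+,a^-\}$ and at most one among $\{b^+,b^-\}$, the $p$- and $m$-columns have two $-1$'s split across the $A$/$B$ row classes, so the bipartition of rows into $A$-rows and $B$-rows certifies TU; the half-integral point is feasible precisely because $A$ and $B$ are feasible; and tightening to $1\le p\le d^+(v)-1$ (legitimate since $d^+(v)\ge 2$ after your reversal WLOG) gives $\max(p_i,m_i)>0$ at both new vertices.

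The genuine gap is the nonemptiness of both colour classes, and your proposed patch does not work. The convexity gesture fails: the half-point can be a mixture of the two ``all edges in one class'' integral vertices, so its membership in the polytope implies nothing about balanced integral points. Concretely, if $A=B=\emptyset$ at $v$ (or all of $\delta(v)$ is oriented into $v$ by both $A$ and $B$, with $m$ free to sit at either extreme), the edge variables are unconstrained and $x\equiv 1$ is an integral optimum with $E_2=\emptyset$, which violates properness since $p_2\ge 1>\deg(v_2)=0$; such configurations form an unbounded family, not ``a handful'' checkable directly. The natural LP repair of adding the row $1\le\sum_e x_e\le \deg(v)-1$ destroys TU: with $e_1\in a^+\setminus b^+$, $e_2\in b^+\setminus a^+$, the submatrix on rows $a^+,b^+$ and the cardinality row, columns $e_1,e_2,p$, is $\left(\begin{smallmatrix}1&0&-1\\0&1&-1\\1&1&0\end{smallmatrix}\right)$, of determinant $2$. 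What is missing is exactly the paper's first move: saturate the constraints at $v$ before setting up your system (the paper shows preserving the enlarged orientations is only harder, so this is WLOG). Then $|a^+|=|b^+|=d^+(v)$ and $|a^-|=|b^-|=d^-(v)$ turn all four constraint pairs into equalities, so $1\le p\le d^+(v)-1$ forces $|a^+\cap E_1|=p\ge 1$ and $|a^+\cap E_2|=d^+(v)-p\ge 1$, making both classes nonempty automatically; moreover $p_i=|a^+\cap E_i|\le |E_i|$ and $m_i=|a^-\cap E_i|\le |E_i|$, so your entire re-choice step becomes unnecessary. With that normalization prepended, your TU argument yields a clean alternative proof of the lemma that replaces the paper's four-rule case analysis by a single integrality statement.
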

\begin{proof}
First, let us introduce an auxiliary vertex $v'$ connected to $v$ by $d^+(v)+d^-(v)$ parallel edges.
We extend $d$ to $v'$ setting the constraints large enough to accommodate all edges incident to $v'$.
Note that this operation has no effect on whether one can split $v$.

Now, let us modify $A$ to obtain $A'$ by orienting $d^+(v)-\deg^+_A(v)$ edges from $v$ to $v'$ and $d^-(v)-\deg^-_A(v)$ 
edges from $v'$ to $v$. Note that $A'$ is feasible in the extended graph and the degree constraints for $v$
are tight. Analogously, we extend $B$ to $B'$. 
A larger partial orientation may only be harder to preserve, so it suffices
to prove that one can split $v$ preserving $A'$ and $B'$.
Equivalently, the construction in this paragraph lets us assume that $\deg^+_A(v)=\deg^+_B(v)=d^+(v)$ and $\deg^-_A(v)=\deg^-_B(v)=d^-(v)$.

Both for $A$ and $B$ we classify edges of $G$ incident to $v$ into three types:
oriented towards $v$ ($-$), oriented towards the other endpoint ($+$) and not included in the orientation ($0$).
In total, we get a partition of the set $\delta(v)$, consisting of edges incident to $v$, into nine sets $E_{ab}$ with $a,b\in \{+,-,0\}$; here $a$ corresponds to the orientation in $A$
and $b$ to the orientation in~$B$.

In some situations, one can clearly take a few edges incident to $v$, and split $v$ into two vertices,
one \emph{new} vertex $v'$ incident to the selected edges, and the other, still denoted as $v$, incident to the remaining edges.
We refer to this operation as \emph{splitting out} some edges.
Note that in order to preserve both $A$ and $B$, we need to split out edges so that for $v'$
the number incoming edges
is the same in both orientations, similarly for the outgoing arcs.
We shall make sure that this number is always 0 or 1, i.e., $(d^+(v'),d^-(v'))\in \{(0,1),(1,0),(1,1)\}$. The constraints at $v$ are decreased accordingly.
\begin{enumerate}
  \item\label{it:1} If $E_{++}\ne \emptyset$, one can split out a single edge $e\in E_{++}$ setting constraints $(1,0)$;
  symmetrically if $E_{--}\ne \emptyset$ one sets $(0,1)$.
  \item\label{it:2} If $E_{+-},E_{-+}\ne \emptyset$, one can split out two edges -- one of each type, setting constraints $(1,1)$.
  \item\label{it:3} If $E_{0+},E_{+0}\ne \emptyset$, one can split out two edges -- one of each type, setting constraints $(1,0)$;
  symmetrically if $E_{0-},E_{-0}\ne \emptyset$ one sets $(0,1)$.
  \item\label{it:4} If $E_{+-}, E_{0+}, E_{-0}\ne \emptyset$ one can split out three edges -- one of each type, setting constraints $(1,1)$;
  symmetrically if $E_{-+}, E_{+0},E_{0-}\ne \emptyset$ one also sets $(1,1)$.
\end{enumerate}

We shall prove that one of these rules is always applicable. Note that the resulting instance is guaranteed to be proper
as we have $\max(d^+(v), d^-(v)) \ge 2$, so it is impossible to leave $v$ with both constraints equal to $0$,
which is forbidden in proper instances.

We proceed by contradiction, showing that if no rule is applicable, then $d^+(v)=d^-(v)=0$, which is impossible because
 $I$ is proper.
Let $n_{ab}=|E_{ab}|$. Recall that we have made an assumption that $\deg^+_A(v)=\deg^+_B(v)=d^+(v)$ and $\deg^-_A(v)=\deg^-_B(v)=d^-(v)$, which implies the following equalities:
\begin{align*}
n_{0+}+n_{++}+n_{-+}&=d^+(v)=n_{+0}+n_{++}+n_{+-},\\
n_{0-}+n_{+-}+n_{--}&=d^-(v)=n_{-0}+n_{-+}+n_{--}.
\end{align*}
If $n_{++}>0$ or $n_{--}>0$ we could apply rule \ref{it:1}.
Therefore
\begin{align*}
n_{0+}+n_{-+}&=d^+(v)=n_{+0}+n_{+-},\\
n_{0-}+n_{+-}&=d^-(v)=n_{-0}+n_{-+}.
\end{align*}
If $n_{+-}>0$ and $n_{-+}>0$ we could apply rule \ref{it:2};  without loss of generality we assume $n_{+-}=0$ and thus
\begin{align*}
n_{0+}+n_{-+}=d^+(v)=n_{+0},\\
n_{0-}=d^-(v)=n_{-0}+n_{-+}.
\end{align*}
Consequently, we have $n_{+0}\ge n_{0+}$ and $n_{0-}\ge n_{-0}$.
Therefore, if $n_{0+}>0$ or $n_{-0}>0$, we could apply rule \ref{it:3},
which means that both these values are equal to 0 and
$$n_{0-}=n_{+0}=n_{-+}=d^+(v)=d^-(v).$$
However, if the common value of these variables was not equal to 0, we could apply rule \ref{it:4}.
This way we get the announced contradiction.
\end{proof}

\begin{corollary}\label{cor:split}
If $I$ is a proper instance with feasible partial orientations $A$ and $B$, then
with a finite sequence of vertex splitting preserving both $A$ and $B$, one can obtain a simple proper instance $I'$.
\end{corollary}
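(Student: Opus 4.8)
The plan is to derive Corollary~\ref{cor:split} from Lemma~\ref{lem:split} by a straightforward induction, using a potential function to guarantee that the process terminates. The statement we want is that any proper instance $I$ with feasible partial orientations $A,B$ can be turned, via a finite sequence of splittings each preserving both $A$ and $B$, into a simple proper instance. The natural approach is to apply Lemma~\ref{lem:split} repeatedly: as long as the current instance is not simple, there is some vertex $v$ with $\max(d^+(v),d^-(v))\ge 2$, and the lemma lets us split it into a proper instance while preserving $A$ and $B$.

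The key step is to set up a measure of progress that strictly decreases with each splitting, so that the procedure cannot run forever. A clean choice is the potential
\[
\Phi(I)=\sum_{v\in V(G)}\bigl(d^+(v)+d^-(v)\bigr).
\]
First I would observe that splitting preserves this quantity: when we split a vertex $v$ into $v$ and a new vertex $v'$, the rules in the proof of Lemma~\ref{lem:split} distribute $v$'s budget between the two, so the total sum of degree constraints is unchanged. Hence $\Phi$ alone does not decrease, and I would instead track the number of vertices whose constraints are ``too large.'' Concretely, let $\Psi(I)$ be the number of vertices $v$ with $\max(d^+(v),d^-(v))\ge 2$, breaking ties by $\Phi$; more robustly, one can use the lexicographic pair $\bigl(\Psi(I),\Phi(I)\bigr)$ or simply argue that each splitting replaces a vertex of budget $\max(d^+(v),d^-(v))\ge 2$ by two vertices each with strictly smaller maximum constraint. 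Since every split vertex $v'$ receives constraints in $\{(0,1),(1,0),(1,1)\}$ and the remaining vertex $v$ has its constraints decreased, the quantity $\sum_v \max(0,\max(d^+(v),d^-(v))-1)$ strictly drops with each application, and it is a nonnegative integer. Termination follows immediately.

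I would then finish by noting that the terminal instance is both simple and proper. It is proper throughout because Lemma~\ref{lem:split} guarantees that each splitting yields a proper instance, and properness is what the lemma requires as a hypothesis for the next step, so the induction stays within the class of proper instances. It is simple at termination because the process only stops when no vertex has $\max(d^+(v),d^-(v))\ge 2$, i.e.\ when $d^+(v),d^-(v)\in\{0,1\}$ for every $v$. Throughout, feasibility of $A$ and $B$ is maintained since every individual splitting preserves both, and preservation composes along the sequence.

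The main obstacle is essentially bookkeeping rather than conceptual: one must confirm that the hypotheses of Lemma~\ref{lem:split}, namely properness of the instance and feasibility of both $A$ and $B$, are genuinely reestablished after each splitting so that the lemma can be reapplied. Properness of the output is stated explicitly in Lemma~\ref{lem:split}, and preservation of $A$ and $B$ is exactly its conclusion, so the induction hypothesis is maintained automatically; the only care needed is to choose the decreasing potential so that it is insensitive to the fact that the number of vertices grows while the total budget $\Phi$ stays constant.
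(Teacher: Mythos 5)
Your overall strategy---exhaustively apply Lemma~\ref{lem:split} and certify termination with a potential, noting that properness and preservation of $A,B$ are reestablished at every step---is exactly the paper's approach, and your bookkeeping about the induction hypotheses is fine. However, the specific potential you settle on does not work: $\sum_v \max\bigl(0,\max(d^+(v),d^-(v))-1\bigr)$ need \emph{not} strictly drop. The splitting rules only guarantee that the new vertex $v'$ gets constraints in $\{(0,1),(1,0),(1,1)\}$ and that $v$'s constraints decrease coordinatewise by that amount; if the split shaves only the \emph{smaller} coordinate, the maximum at $v$ is unchanged. Concretely, take $v$ with $(d^+(v),d^-(v))=(1,3)$ and suppose some edge at $v$ is oriented away from $v$ in both $A$ and $B$ (so $E_{++}\neq\emptyset$); rule~\ref{it:1} then splits out one edge with constraints $(1,0)$, leaving $v$ with $(0,3)$. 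Your potential contributes $2$ at $v$ before and after, and $0$ at $v'$, so it is unchanged. Your fallback measures stall on the same example: $\Psi$ (the count of vertices with $\max\ge 2$) is unchanged, and $\Phi$ is invariant, so the lexicographic pair $(\Psi,\Phi)$ does not decrease either. Nothing in the statement of Lemma~\ref{lem:split} lets you control which rule fires, so you cannot rule such steps out.

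The repair is small but genuinely needed, and it is what the paper does: since each splitting adds exactly one vertex and changes neither $D^+=\sum_v d^+(v)$ nor $D^-=\sum_v d^-(v)$, while every \emph{proper} instance satisfies $|V(G)|\le D^+ + D^-$ (each vertex has $\max(d^+(v),d^-(v))>0$, hence budget at least $1$), the number of vertices strictly increases along the sequence but is bounded above by the invariant $D^++D^-$. Equivalently, in your language, the correct potential is $\sum_v\bigl(d^+(v)+d^-(v)\bigr)-|V(G)|$, a nonnegative integer on proper instances that drops by exactly $1$ per split---the key being to base the potential on the \emph{sum} of the two constraints at each vertex rather than their maximum. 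Note also that this is where properness earns its keep in the termination argument, a point your write-up uses only to keep the lemma applicable.
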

\begin{proof}
It suffices to exhaustively apply Lemma~\ref{lem:split}. Observe that this process
must terminate, as vertex splitting increases the number of vertices and changes neither $D^+=\sum_{v\in V(G)}d^+(v)$ nor  $D^-=\sum_{v\in V(G)}d^-(v)$, while $|V(G)|\le D^++D^-$ for any proper instance,
\end{proof}

For a proof of Theorem~\ref{thm:red}, it suffices to apply Corollary~\ref{cor:split} for $A=F$ and $B=OPT_I$.
Vertex splitting may only reduce the family of feasible partial orientations, so $OPT_I$ is still a global optimum.
Also, this operation preserves $F$ as a local optimum with respect to rule~\ref{it:rev}.
For rule~\ref{it:3d} the analogous property follows 
from the fact that vertex splitting can be seen as removing sets in the underlying instance of 3-\sp
(without changing the size of the universe), and by Observation~\ref{obs:monot}, the corresponding rule for 3-\sp is monotone,
i.e., removing sets from the universe does not make finding an improving set easier.

Therefore, Corollary~\ref{cor:split} gives a simple instance  $I'$
for which $F$ and $OPT_I$ are still a local and a global optimum, respectively. 

\section{Tools from $k$-\sp}
\label{sec:ksp2}
In this section we recall and reinterpret several pieces of the analysis
of the local search algorithms for $k$-\sp, see~\cite{hs,cygan}.

This analysis focuses on the subgraph of the conflict graph $G(\F)$ induced by two solutions:
a local and a global optimum. Sets belonging to both families can be ignored, which leads to
a bipartite graph with degrees bounded by $k$. The following results
are stated in the language of abstract bipartite graphs, so that we can also use them in a slightly
different context.

\begin{definition}
Let $H=(A,B,E(H))$ be a bipartite graph. A set $X\sub B$ is called \emph{improving},
if $|N_H(X)|<|X|$.
\end{definition}

The following lemma is a part of the analysis
of the classic  $(k+\eps)/2$-approximation local search, which goes back to Hurkens and Schrijver~\cite{hs}.
Our proof is based on the proof of Lemma 3.11 in~\cite{cygan}. Although that result uses larger class of improving sets
to obtain a better bound on $\frac{|B|}{|A|}$, the overall line of reasoning remains the same.
\begin{lemma}\label{lem:ksetp}
Fix a positive integer $k\ge 3$. For any $\eps>0$ there exists a constant $c_\eps$ satisfying the following property.
Let $H=(A,B,E(H))$ be a bipartite graph with degrees not exceeding $k$.
If there is no improving set $X\sub B$ with $|X|\le c_\eps$, then $|B|\le \frac{k+\eps}{2}|A|$.
\end{lemma}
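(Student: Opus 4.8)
The plan is to follow the Hurkens--Schrijver counting argument (as in the proof of Lemma~3.11 in~\cite{cygan}), the only real content being how the absence of \emph{small} improving sets buys a factor of essentially two over the trivial bound. First I would normalize the instance. Any $a\in A$ with no neighbour contributes nothing to $N_H(X)$ for any $X\sub B$, so deleting it neither creates an improving set nor changes $|B|$, while it only decreases $|A|$; hence it suffices to prove the bound after all such vertices are removed, and we may assume every $a\in A$ has degree at least one. Likewise, if some $b\in B$ had degree $0$ then $\{b\}$ would be an improving set of size $1\le c_\eps$, a contradiction, so every $b\in B$ has degree at least one as well. At this point a one-line count already yields something: writing $e$ for the number of edges, $|B|\le e=\sum_{a\in A}\deg_H(a)\le k|A|$. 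The entire difficulty is to replace this constant $k$ by $\tfrac{k+\eps}{2}$.

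The absence of improving sets of size at most $c_\eps$ says precisely that every $X\sub B$ with $|X|\le c_\eps$ satisfies $|N_H(X)|\ge|X|$, i.e.\ such an $X$ admits a system of distinct representatives into $A$. I would exploit this the way Hurkens and Schrijver do: construct a fractional \emph{charging} that sends one unit from each $b\in B$ to its neighbours in $A$ so that every $a\in A$ receives total charge at most $\tfrac{k+\eps}{2}$; summing over $A$ then gives $|B|=\sum_{b}1\le\tfrac{k+\eps}{2}|A|$. The degree bound caps the number of $b$'s incident to a fixed $a$ by $k$, and the local representative condition forces any family of $B$-vertices that crowds a given $a$ to also expand into many further $A$-vertices, so the unit demands can be spread out. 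The extremal situation to keep in mind is $A=V(R)$, $B=E(R)$ for a $k$-regular graph $R$ of large girth, where $b$'s have degree $2$, $a$'s have degree $k$, minimal improving sets are cycles, and $|B|=\tfrac{k}{2}|A|$ exactly.

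The main obstacle is the quantitative step that turns the \emph{local} expansion $|N_H(X)|\ge|X|$ for $|X|\le c_\eps$ into the \emph{global} ratio $\tfrac{k+\eps}{2}$, with the loss kept below $\eps$. This is where $c_\eps$ must be chosen large as a function of $k$ and $\eps$: the high-girth example shows a minimal improving set can be as large as the girth, so $c_\eps\to\infty$ is unavoidable as $\eps\to 0$, and the additive error in the charging decays only as $c_\eps$ grows. I expect to handle this exactly as in~\cite{hs,cygan}: build vertex-disjoint witness trees, or iteratively peel off the representative systems guaranteed on pieces of size at most $c_\eps$, and bound the residual deficiency against $\eps|A|$ by a recursive counting inequality. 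Everything else---the normalization and the final summation---is routine once this counting step is in place.
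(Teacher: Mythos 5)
Your high-level plan is the same as the paper's --- both follow the Hurkens--Schrijver counting argument as reorganized in the proof of Lemma~3.11 of~\cite{cygan} --- but the proposal stops exactly where the proof begins: after correctly identifying that the whole difficulty is converting the local condition $|N_H(X)|\ge |X|$ for $|X|\le c_\eps$ into the global ratio $\tfrac{k+\eps}{2}$, you defer that step to ``handle this exactly as in~\cite{hs,cygan}'' without reconstructing the argument. Concretely, what is missing is the iterative peeling scheme. The paper builds a sequence of at most $\tfrac1\eps$ induced subgraphs $H_i=H[A_i,B_i]$ maintaining two invariants: $H_i$ has no improving set of size at most $2(k+1)^{\frac{1}{\eps}-i}$, and $|A\setminus A_i|=|B\setminus B_i|\ge \eps i|A|$. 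In each round, degree-$0$ vertices of $B_i$ are excluded (improving sets of size $1$) and degree-$1$ vertices have pairwise distinct neighbours (else an improving set of size $2$), so $B_i^1$ is matched into $A_i$; if $|B_i^1|>\eps|A|$, one deletes $B_i^1$ together with its matched neighbours, and --- this is the technical heart --- lifts any improving set $X$ of $H_{i+1}$ to an improving set $X'=X\cup\{m(a): a\in (A_i\setminus A_{i+1})\cap N_{H_i}(X)\}$ of $H_i$ with $|X'|\le (k+1)|X|$, which is both why the invariant survives and why $c_\eps=2(k+1)^{1/\eps}$ must grow geometrically in $\tfrac1\eps$ (your remark that $c_\eps\to\infty$ is unavoidable is correct, but the mechanism forcing the specific geometric decay is this lifting). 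Once $|B_i^1|\le\eps|A|$, the factor of two is just an edge count, $|B_i^1|+2|B_i^{2+}|\le |E(H_i)|\le k|A_i|$, which together with the invariants yields $2|B|\le (k+\eps)|A|$. None of this appears in your write-up beyond the (accurate) normalization that all degrees may be assumed positive.

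Moreover, the charging formulation you propose is circular as stated. A scheme sending one unit from each $b\in B$ to its neighbours so that every $a\in A$ receives at most $t$ exists (by max-flow/min-cut, i.e.\ a defect Hall argument) if and only if $|X|\le t\,|N_H(X)|$ for \emph{every} $X\sub B$, including sets of unbounded size --- and establishing that expansion for large $X$ with $t=\tfrac{k+\eps}{2}$ is essentially the lemma itself, whereas the hypothesis only gives $|N_H(X)|\ge |X|$ for $|X|\le c_\eps$. A purely vertex-local version fares no better: the hypothesis only forbids two degree-$1$ vertices of $B$ sharing a neighbour, so an $a$ with one degree-$1$ neighbour and $k-1$ degree-$2$ neighbours may be forced to absorb charge about $\tfrac{k+1}{2}>\tfrac{k+\eps}{2}$; the $\eps$ slack is inherently global, which is precisely what the iterative peeling (charging deficient rounds against $\eps|A|$) achieves. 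So the proposal is a reasonable roadmap pointing at the correct literature, but the decisive combinatorial step --- the peeling with the $(k+1)$-blow-up lifting and the terminal edge count --- is absent, and the suggested alternative route does not close on its own.
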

\begin{proof}
We are going to construct a sequence of at most $\frac{1}{\eps}$ induced subgraphs $H_i=H[A_i,B_i]$,
with $A_i\sub A$ and $B_i \sub B$.
These subgraphs shall satisfy the following two properties:
\begin{enumerate}[(a)]
  \item\label{it:rimp} in $H_i$ there is no subset $X\sub B_i$ such that $|X|\le 2(k+1)^{\frac{1}{\eps}-i}$ and $|N_{H_i}(X)|< |X|$,
  \item\label{it:eq} $|A\sm A_i| = |B\sm B_i|\ge \eps i|A|$.
\end{enumerate}
We start with $H_0=H$, which trivially satisfies~(\ref{it:eq}). It suffices to take $c_\eps=2(k+1)^{\frac{1}{\eps}}$
to make sure that~(\ref{it:rimp}) also holds.

Consider the graph $H_i$. Let us classify vertices of $B_i$ based on their degree in $H_i$: we define $B_i^d$
as the set of vertices of degree $d$, and $B_i^{d+}$ as the set of vertices of degree at least $d$.
Note that~(\ref{it:eq}) implies $i \le \frac{1}{\eps}$, and thus $2(k+1)^{\frac{1}{\eps}-i} \ge 2$.
Consequently, by~(\ref{it:rimp}), $B_i^0=\emptyset$  and the vertices of $B_i^1$ have distinct neighbors (otherwise we would have an improving
set of size one or two, respectively).

We consider two cases, depending on whether $|B_i^1|\le \eps |A|$.
First, we suppose this inequality does not hold.
Then we construct $H_{i+1}$ setting $B_{i+1}=B_i^{2+}$ and $A_{i+1}=A_i\sm N_{H_i}[B_i^1]$. As we have noted,
vertices in $B_i^1$ do not share neighbours, so $|A_i\sm A_{i+1}|=|B_i^1|=|B_i\sm B_{i+1}|$,
and consequently $|B\sm B_{i+1}|=|A\sm A_{i+1}|$. Also, we  clearly have $|B\sm B_{i+1}|\ge \eps i|A|+|B_i^1|\ge \eps(i+1)|A|$.

Therefore, it suffices to show that $H_{i+1}$ satisfies property (\ref{it:rimp}).
Take $X\sub B_{i+1}$ such that $|N_{H_{i+1}}(X)|<|X|$. We construct $X'\sub B_i$
with $|N_{H_i}(X')|< |X'|$ such that $|X'|\le (k+1)|X|$. 
Clearly, if $X$ then contradicts (\ref{it:rimp}) for $H_{i+1}$, so does $X'$ for $H_i$.
Recall that $H_i[B_i\sm B_{i+1}, A_i\sm A_{i+1}]$ is a perfect matching.
We denote the unique neighbor of a vertex $v$ in this graph by $m(v)$.
We simply define $X' = X \cup \{m(a) : a\in (A_{i}\sm A_{i+1})\cap N_{H_i}(X)\}$ (see also Figure~\ref{fig1}).
Then $N_{H_i}(X')=N_{H_i}(X) = N_{H_{i+1}}(X)\cup \{m(b) : b\in X'\sm X\}$.
Consequently, $|N_{H_i}(X')|=|N_{H_{i+1}}(X)|+|X'\sm X|<|X|+|X'\sm X|=|X'|$.
Moreover, by the degree restriction in $H$, we have $|N_{H_i}(X)|\le k|X|$,
and thus $|X'|\le |X|+|N_{H_i}(X)|\le (k+1)|X|$, as claimed.

\begin{figure}[ht]
\begin{center}
\begin{tikzpicture}[scale=1]
  \tikzstyle{vertex}=[circle,fill=black,minimum size=0.20cm,inner sep=0pt]
  \tikzstyle{vertex2}=[circle,draw=black,fill=gray!50,minimum size=0.20cm,inner sep=0pt]
  \tikzstyle{terminal}=[rectangle,draw=black,fill=white,minimum size=0.2cm,inner sep=0pt]

	\foreach \x in {2,3,4,5}
	{
		\node[vertex] (a\x) at (\x,2){};
	}
	\node[vertex2] (a0) at (0,2){};
	\node[vertex2] (a1) at (1,2){};

	\foreach \x in {2,3,4}
	{
		\node[vertex] (b\x) at (\x+0.5,0){};
	}
	\node[vertex2] (b0) at (0.5,0){};
	\node[vertex2] (b1) at (1.5,0){};

	\draw (a1) -- (b1);
	\draw (a0) -- (b0);
	\draw (a2) -- (b2);
	\draw (a2) -- (b4);
	\draw (a2) -- (b0);
	\draw (a2) -- (b1);
	\draw (a3) -- (b1);
	\draw (a3) -- (b4);
	\draw (a3) -- (b3);

\draw (a3) -- (b2);
\draw (a4) -- (b4);
\draw (a4) -- (b3);
\draw (a5) -- (b4);
\draw (a5) -- (b3);

\draw[decorate, decoration=brace] (1.8,2.2) -- (5.2,2.2);
\draw (3.5,2.6) node {$X$};

\draw[decorate, decoration=brace] (-0.2,2.2) -- (1.2,2.2);
\draw (0.5,2.6) node {$X' \setminus X$};

\draw[decorate, decoration=brace] (4.7,-0.2) -- (2.3,-0.2);
\draw (3.5,-0.6) node {$N_{H_{i+1}}(X)$};

\end{tikzpicture}
\end{center}
\caption{Lifting an improving set $X$ in $H_{i+1}$ to an improving set $X'$ in $H_{i}$. Gray vertices belong to $H_i$ but not to $H_{i+1}$.}
\label{fig1}
\end{figure}
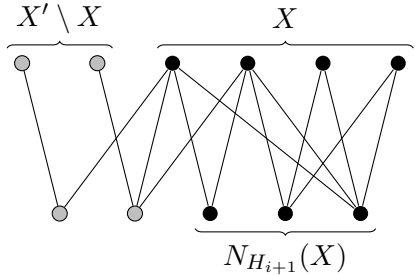

Therefore it suffices to consider the case when $|B_i^1|\le \eps |A|$.
We count edges of $H_i$;  clearly, $|E(H_i)|\le k|A_i|$ since the degrees do not exceed $k$.
On the other hand, $|E(H_i)|\ge |B_i^1|+2|B_i^{2+}|$, and consequently $|B_i^1|+2|B_i^{2+}|\le k|A_i|$.
Summing up, we get
$$2|B| = 2|B\sm B_i| + 2|B_i| = 2|A\sm A_i| + 2|B_i^1|+2|B_i^{2+}|
\le 2|A\sm A_i| +|B_i^1|+k|A_i|\le (k+\eps)|A|,$$
that is, $|B|\le \frac{k+\eps}{2}|A|$, which concludes the proof.
\end{proof}

The following lemma is, on the other hand, a slight generalization of Lemma 3.11 in~\cite{cygan},
restricted to $k=3$. Under the original assumptions it shows that the ratio $\frac{|B|}{|A|}$
is close to the worst-case $\frac{4}{3}$ only if (almost) all vertices in $A$ are of degree 3, and thus allows for a better
bound if some fraction of vertices have degree at most 2. 
\begin{lemma}\label{lem:ptw}
For any $\eps>0$ there exists a constant $c_\eps$ satisfying the following property.
Let $H=(A,B,E(H))$ be a bipartite graph with degrees not exceeding $3$.
If there is no improving set $X\sub B$ such that $|X|\le c_\eps \log |V(H)|$ and $\ptw(H[N_G[X]])\le c_\eps$,
then $$|B|\le (1+\eps)|A|+\tfrac13|\{a\in A : \deg_H(a)\ge 3\}|.$$
\end{lemma}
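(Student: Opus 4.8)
The plan is to follow the two-phase structure suggested by the proof of Lemma~\ref{lem:ksetp}, but to replace its crude final edge count by a sharper argument exploiting \emph{path-like} (bounded-pathwidth) improving sets of logarithmic size, which is precisely the extra power granted by the hypothesis. Throughout, write $A_{=3}=\{a\in A:\deg_H(a)=3\}$; since degrees are bounded by $3$, this is the set appearing in the statement.

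First I would run the peeling of Lemma~\ref{lem:ksetp} to build induced subgraphs $H_i=H[A_i,B_i]$ that delete the degree-$1$ vertices of $B$ together with their (distinct) matched neighbours in $A$, stopping at the first index $i$ with $|B_i^1|\le\eps|A|$. Two points need checking. First, the lifting of Figure~\ref{fig1} blows up the size of an improving set by a factor $\le k+1=4$, but it also \emph{preserves} bounded pathwidth: the lifted set $X'$ is obtained from $X$ by attaching the matched partners $m(a)$, which are pendant vertices of $H_i$, and attaching pendants changes neither $N_{H_i}(X')=N_{H_i}(X)$ nor the pathwidth of the induced subgraph by more than a constant; carrying size budgets of the form $C_\eps\log|V(H)|\cdot 4^{1/\eps-i}$ and a constant pathwidth budget through the at most $1/\eps$ levels keeps us within the allowed range. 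Second, deleting a degree-$1$ vertex of $B$ removes an edge only at its unique (deleted) neighbour, so every surviving $a\in A_i$ keeps its original degree; hence $\{a\in A_i:\deg_{H_i}(a)=3\}\subseteq A_{=3}$, while $|B\setminus B_i|=|A\setminus A_i|$ gives $|B|-|A|=|B_i|-|A_i|$. It therefore suffices to prove $|B_i|\le|A_i|+\tfrac13 a_3+O(\eps)|A|$, where $a_d$ denotes the number of $A_i$-vertices of degree $d$ in $H_i$.

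The heart of the argument is Phase~2. Let $n_d=|B_i^d|$ and form the auxiliary multigraph $G_2$ on the vertex set $A_i$ in which each $b\in B_i^2$ becomes an edge joining its two neighbours. A family $X\subseteq B_i^2$ is improving exactly when, viewed as a set of $G_2$-edges, it spans strictly fewer than $|X|$ vertices; equivalently some component of the edge-subgraph has more edges than vertices, i.e.\ $X$ contains a connected sub-multigraph with two independent cycles (a \emph{theta} or a \emph{dumbbell}). For such a sub-multigraph, $H_i[X\cup N_{H_i}(X)]$ is exactly its one-edge subdivision and so has pathwidth at most a small constant; thus the no-improving-set hypothesis forbids \emph{small} theta/dumbbell sub-multigraphs. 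I claim this forces $e(G_2)-v(G_2)=n_2-|V(G_2)|\le\eps|A|$, whence $n_2\le|A_i|+\eps|A|$. Indeed, were the excess larger than $\eps|A|$, the $2$-core of $G_2$ would have minimum degree $\ge2$, maximum degree $\le3$ and average degree $\ge2(1+\eps)$; by the standard Moore-type girth bound it contains a cycle of length $O_\eps(\log|V(H)|)$, and the linear number (at least $2\eps|A|$) of degree-$3$ vertices it must carry lets one close this cycle into a theta or dumbbell of logarithmic size and constant pathwidth — contradicting the hypothesis.

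Phase~3 is bookkeeping. The edge identity $n_1+2n_2+3n_3=a_1+2a_2+3a_3$ yields $3|B_i|=2n_1+n_2+(a_1+2a_2+3a_3)$; feeding in $n_1\le\eps|A|$ and $n_2\le a_0+a_1+a_2+a_3+\eps|A|$ from Phase~2 gives $3|B_i|\le 3a_0+3a_1+3a_2+4a_3+O(\eps)|A|=3|A_i|+a_3+O(\eps)|A|$, i.e.\ $|B_i|\le|A_i|+\tfrac13 a_3+O(\eps)|A|$. Combined with $|B|-|A|=|B_i|-|A_i|$ and $a_3\le|A_{=3}|$, this is exactly $|B|\le(1+\eps)|A|+\tfrac13|A_{=3}|$ after rescaling $\eps$ and taking $c_\eps$ large. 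I expect the technical crux to be the extraction step in Phase~2: turning a global excess of $G_2$ into a \emph{single, logarithmic-size, bounded-pathwidth} theta or dumbbell rather than merely some (possibly huge) double cycle. The delicate point is that the hypothesis ``excess $\ge\eps|A|$'' caps the average subdivision length of the $2$-core at $O(1/\eps)$ and forces a linear density of branch vertices, which together with the girth bound must be parlayed into a short closing path; I would also double-check that degree-$2$ vertices of $B$ incident to low-degree vertices of $A$, and all degree-$3$ vertices of $B$, are handled correctly — but these are fully absorbed by the single edge identity used in Phase~3.
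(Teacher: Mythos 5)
Your plan has the same skeleton as the paper's proof: run the peeling of Lemma~\ref{lem:ksetp}, stop when $|B_i^1|\le\eps|A|$, and replace the crude final count by $|E(H_i)|\le 2|A_i|+|A_i^3|$ together with a bound $|B_i^2|\le(1+\eps)|A_i|$ --- your Phase~3 is literally the paper's computation. The difference is that the paper obtains the Phase~2 bound by citing Claim~3.12 of~\cite{cygan} as a black box, whereas you re-derive it via the auxiliary multigraph on $A_i$ with $B_i^2$ as edges and the extraction of a short theta/dumbbell; that extraction (excess $\ge\eps|A|$ implies a connected subgraph of cyclomatic number $\ge 2$, logarithmic size, constant pathwidth) is standard and is in fact the internal content of the cited claim, so this is the same route one level deeper rather than a genuinely different proof.

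The genuine gap is in your Phase~1, in the claim that the lifting preserves bounded pathwidth because ``the lifted set $X'$ is obtained from $X$ by attaching pendant vertices.'' The equality $N_{H_i}(X')=N_{H_i}(X)$ is correct, but the relevant comparison is between $H_i[N_{H_i}[X']]$ and $H_{i+1}[N_{H_{i+1}}[X]]$, and the former is \emph{not} the latter plus pendants: every deleted vertex $a\in(A_i\sm A_{i+1})\cap N_{H_i}(X)$ re-enters the induced neighborhood together with \emph{all} of its edges into $X$. Such an $a$ spends one of its (at most three) edges on its matched partner $m(a)$, but the remaining up to two edges may both land in $X$, so each such $a$ acts as a subdivided chord between two vertices of $X$. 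Per level there can be up to $|N_{H_i}(X)|\le 3|X|$ such vertices, i.e.\ logarithmically many chords, and adding $t$ subdivided chords to a bounded-pathwidth graph can raise its pathwidth by $\Omega(t)$ (subdivided expanders), not by a constant. Hence property~(\ref{it:rimp}) with a \emph{constant} pathwidth budget does not propagate down the levels by your argument, and since your Phase~2 improving set lives in $H_i$ while the hypothesis only forbids improving sets in $H=H_0$, the whole contradiction hinges on exactly this step. Repairing it requires genuinely structural control of the lifted sets --- e.g.\ maintaining an improving set shaped as a cyclomatic-number-$2$ core with hanging trees of depth $O(1/\eps)$, pruning to a smaller theta/dumbbell whenever a chord raises the cyclomatic number --- which is the ``stronger requirements for $X$'' bookkeeping that the paper deliberately delegates to the proof of Claim~3.12 in~\cite{cygan} instead of redoing it.
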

\begin{proof}
We follow the notation and the main line of reasoning of the proof of Lemma~\ref{lem:ksetp}, which
for $k=3$ has stronger requirements for $X$.
We only alter the last step of the proof,
i.e., the analysis when $|B_i^1|\le \eps|A|$.
This requires the following reformulation of Claim 3.12 from~\cite{cygan},
which is where we use the whole strength of the assumptions of Lemma~\ref{lem:ptw}.
\begin{claim}[\cite{cygan}]
For large enough $c_\eps$ we have $|B_i^2| \le (1+\eps)|A_i|$. 
\end{claim}

As before, we count edges $E(H_i)$.
We clearly have $|E(H_i)|= |B_i^1|+2|B_i^{2}|+3|B_i^{3}|$.
On the other hand, $|E(H_i)|\le 2|A_i|+|A_i^3|$ where $A_i^3=\{a\in A_i : \deg_{H_i}(a)=3\}$.
Summing up, we obtain
\begin{multline*}
3|B|=3|B\sm B_i|+3|B_i^1|+3|B_i^2|+3|B_i^{3}|= 3|A\sm A_i|+2|B_i^1|+|B_i^2|+|E(H)|\le\\
3|A\sm A_i|+2\eps|A|+(1+\eps)|A_i|+2|A_i|+|A_i^3|\le 3(1+\eps)|A|+|\{a\in A: deg_H(a)=3\}|,
\end{multline*}
that is, $|B|\le (1+\eps)|A|+|\{a\in A: deg_H(a)=3\}|$, which completes the proof.
\end{proof}

\section{Analysis}
\label{sec:analysis}
\newcommand{\opt}{\overline{OPT}}
\renewcommand{\a}{\overline{A}}
\renewcommand{\b}{\overline{B}}
\newcommand{\f}{\overline{F}}

We start the analysis of the algorithm of Section~\ref{sec:alg}
with a result which lets us construct the counterpart of the bipartite
conflict graph with respect to two feasible solutions.
Later we apply Theorem~\ref{thm:red}, which allows restricting to simple instances.

\begin{lemma}\label{lem:conf}
Let $I$ be a simple instance of $\udpo$ and let $A,B$ be a pair of feasible partial orientations.
There exists a bipartite graph $H=(\b\sm \a,\a\sm \b, E(H))$ such that:
\begin{enumerate}[(a)]
  \item\label{it:deg} degrees in $H$ do not exceed 4,
\item\label{it:imp} for any $X\sub \b\sm \a$ there is a feasible partial orientation $F$ with $\overline{F}=(\overline{A}\sm N_H(X))\cup X$.
\end{enumerate}
\end{lemma}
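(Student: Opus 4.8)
The plan is to build the bipartite graph $H$ directly from the conflict structure of the two orientations, working with undirected edges. First I would set the vertex classes to be $\b\sm\a$ on one side and $\a\sm\b$ on the other, exactly as the statement requires, and I would connect an edge $e\in\b\sm\a$ to an edge $f\in\a\sm\b$ whenever orienting $e$ as in $B$ \emph{conflicts} with keeping $f$ oriented as in $A$ at a shared endpoint. Since the instance is simple, each vertex has $d^+(v),d^-(v)\in\{0,1\}$, so at any vertex $v$ there is at most one $A$-arc leaving $v$ and at most one $A$-arc entering $v$, and likewise for $B$. The key is to make $N_H(X)$ capture precisely the $A$-edges that must be removed to insert the orientation of $X$ prescribed by $B$.

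The main verification is property~(\ref{it:imp}): that for every $X\sub\b\sm\a$ the edge set $(\a\sm N_H(X))\cup X$ admits a feasible orientation. Here I would orient each $e\in X$ as $B$ does and each surviving $f\in\a\sm N_H(X)$ as $A$ does. I need to check no degree constraint is violated. Consider any vertex $v$ and its outgoing constraint $d^+(v)\le 1$: the only danger is two arcs leaving $v$, one from $X$ and one from the retained part of $A$. If such an $A$-arc $f$ left $v$ and some $e\in X$ also left $v$ in $B$, then by construction $e$ and $f$ would be adjacent in $H$, so $f\in N_H(X)$ and would have been removed — contradiction. The symmetric argument handles $d^-(v)$. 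I would also note that edges in $X\cap\a$ (oriented the same way in both) cause no trouble, and that common edges $\a\cap\b$ are excluded from both sides, which is what makes the graph well-defined; the orientation produced is genuinely a partial orientation since we never orient a single edge both ways.

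For the degree bound~(\ref{it:deg}), I would count, for a fixed $e=\{u,v\}\in\b\sm\a$, how many $A$-edges it can conflict with. The arc for $e$ in $B$ has a tail and a head; at its tail it can collide with at most one $A$-arc leaving and at most one $A$-arc entering, and similarly at its head, giving at most four conflicting $A$-edges. The same symmetric count bounds the degree of each $f\in\a\sm\b$. I expect the main obstacle to be pinning down the precise adjacency rule so that it is simultaneously symmetric enough to give the clean degree bound $4$ and tight enough that $N_H(X)$ removes \emph{exactly} the obstructions and no more — in particular handling the subtle case where an edge lies in $X$ but its two endpoints also host retained $A$-arcs in the opposite direction, and ruling out double-counting when $u$ and $v$ interact through more than one shared neighbour role. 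Getting the bookkeeping of the nine orientation-type combinations at each endpoint correct, in the spirit of the $E_{ab}$ analysis in Lemma~\ref{lem:split}, is where the care is needed.
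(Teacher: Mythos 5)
There is a genuine gap, and it lies exactly where you suspected trouble: your per-endpoint conflict rule does not account for the common edges $\overline{A}\cap\overline{B}$. These edges are not vertices of $H$, so they can never be removed by $N_H(X)$, yet they may be oriented \emph{oppositely} in $A$ and in $B$. Your feasibility argument silently assumes that every retained edge can be oriented ``as $A$ does'' and every edge of $X$ ``as $B$ does,'' but a common edge oriented as in $A$ can collide with an $X$-arc at a shared vertex; flipping it to agree with $B$ then creates a new collision at its other endpoint, and the re-orientation cascades along paths of common edges until it hits a retained $A$-edge far away from $X$ --- one that shares no endpoint with any edge of $X$ and hence is not in your $N_H(X)$. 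Concretely: take vertices $u,v,z,w,t$ with all constraints $d^+=d^-=1$ except $d^+(w)=d^+(t)=0$, $d^-(w)=d^-(t)=1$; let $A=\{u\to v,\ v\to z,\ z\to t\}$ and $B=\{z\to v,\ v\to u,\ u\to w\}$. Then $\overline{A}\cap\overline{B}=\{uv,vz\}$, $\overline{A}\sm\overline{B}=\{zt\}$, $\overline{B}\sm\overline{A}=\{uw\}$. The edges $uw$ and $zt$ share no endpoint, so under your rule $N_H(\{uw\})=\emptyset$; but $\overline{F}=\{uv,vz,zt,uw\}$ admits no feasible orientation: $uw$ is forced as $u\to w$ (since $d^+(w)=0$), which forces $v\to u$, then $z\to v$, leaving $zt$ with no legal direction. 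So property~(\ref{it:imp}) fails for your $H$. (A symptom of the same problem: under separate in/out constraints your direct rule actually yields degree at most $2$, not $4$ --- a $B$-arc only competes with the one $A$-arc leaving its tail and the one entering its head --- which is a hint that the rule is too weak.)

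The paper's proof supplies precisely the missing idea. It forms the graph $G'=(V(G),\overline{A}\cap\overline{B})$ of common edges and makes $a\in\overline{A}\sm\overline{B}$ adjacent to $b\in\overline{B}\sm\overline{A}$ whenever both touch the same connected component $C$ of $G'$ (where $\delta_G[C]$ denotes edges incident to $C$). Since the instance is simple, every vertex of $G'$ has degree at most $2$, so components are paths or cycles, and a counting argument shows each component is incident to at most two edges of $\overline{A}\sm\overline{B}$ and at most two of $\overline{B}\sm\overline{A}$; as each edge meets at most two components, degrees in $H$ are at most $4$. For property~(\ref{it:imp}), one orients each component's incident $\overline{F}$-edges wholesale: consistently with $B$ if $X$ touches the component (all conflicting $A$-only edges there having been removed via $N_H(X)$), and consistently with $A$ otherwise; every vertex then sees arcs from only one of the two feasible orientations, so the constraints hold. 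In short, conflicts must be routed through whole components of the common-edge graph, not through shared endpoints --- without that, the edge set $(\overline{A}\sm N_H(X))\cup X$ need not be orientable, as the example above shows.
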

\begin{proof}
Let $A'=\a\sm \b$, $B'=\b\sm \a$ and $G'=(V(G), \a\cap \b)$.
For a connected component $C$ of $G'$ we define $\delta_G[C]$ as the set of edges $e\in E(G)$ incident to at least one vertex of $C$.
We construct the graph $H$ as follows.
We make $a\in A'$ adjacent in $H$ to $b\in B'$ if and only if both $a$ and $b$ belong to $\delta_G[C]$ for some connected component $C$.

Let us prove that $H$ satisfies the desired properties, starting with~(\ref{it:deg}).
Consider any connected component $C$ of $G'$.
As $I$ is a simple instance, all the vertices in $G'$ are of degree at most two,
which means that $C$ is either a path or a cycle.
Consequently, in either case, again by the assumption that $I$ is simple,
we have $|\delta_G[C]\cap A'|\le 2$ and $|\delta_G[C]\cap B'|\le 2$,
because, both in $A$ and in $B$, at most $2|C|$ arc endpoints can be incident to $C$.
Any edge is incident to at most two components of~$G$, for each of them we may have created
at most two neighbors in $H$, and thus the degrees in $H$ are at most 4.

To prove~(\ref{it:imp}) we take $X\sub B'$ and consider a set $\overline{F}=(\overline{A}\sm N_H(X))\cup X$.
Note that for any component $C$ of $G'$ we have $\overline{F}\cap \delta_G[C]\sub \overline{A}\cap \delta_G[C]$
(if $X\cap \delta_G[C]=\emptyset$) or $\overline{F}\cap \delta_G[C]\sub \overline{B}\cap \delta_G[C]$ (otherwise).
We can orient edges of $\delta_G[C]\cap \overline{F}$ consistently with $A$ in the former case 
and consistently with $B$ in the latter. 
Note that if there is an edge $e\in \overline{F}$ between two connected components of $G'$,
then $e\notin \overline{A}\cap \overline{B}$, so both components
are oriented consistently with $A$ (if $e\in \overline{A}$) or $B$ (if $e\in \overline{B}$),
hence the proposed orientation is well-defined.
It remains to argue that if we orient the edges in this manner,
then all the capacity constraints are satisfied.
Consider any vertex $v$ of $G'$. As it belongs to exactly one connected component of $G$,
its incident edges from $\overline{F}$ are either oriented as in $A$ or as in $B$,
in either case the degree constraints are obeyed.
\end{proof}

Next, we apply the conflict graph and the technique similar to the standard analysis of the $(2+\eps)$-local search approximation
of $4$-set packing. This lets us derive a bound with respect to rule~\ref{it:rev}.

\begin{lemma}\label{lem:2eps}
Fix $\eps>0$. There exists a constant $c_{\eps}$ such that
for any simple instance~$I$ of \udpo the following condition holds.
Let $F$ be a feasible partial orientation which cannot be improved using rule~\ref{it:rev}
and let $OPT$ be an  optimum partial orientation. Then $|\opt\sm \f|\le (2+\eps)|\f\sm \opt|$.
\end{lemma}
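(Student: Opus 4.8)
The plan is to apply Lemma~\ref{lem:conf} with $A=F$ and $B=OPT$, producing a bipartite graph $H=(\b\sm\a,\a\sm\b,E(H))=(\opt\sm\f,\f\sm\opt,E(H))$ whose degrees do not exceed $4$. Writing $B=\opt\sm\f$ and $A=\f\sm\opt$, our goal $|\opt\sm\f|\le(2+\eps)|\f\sm\opt|$ is exactly the bound $|B|\le\frac{k+\eps}{2}|A|$ of Lemma~\ref{lem:ksetp} for $k=4$. So the whole strategy is to invoke Lemma~\ref{lem:ksetp}, and the only real work is to verify that its hypothesis holds in our setting: that there is no improving set $X\sub B$ with $|X|\le c_\eps$ (in the bipartite-graph sense, $|N_H(X)|<|X|$).

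First I would fix the constant: choose $c_\eps$ to be the constant from Lemma~\ref{lem:ksetp} applied with $k=4$ to the error parameter $\eps$. Then I would argue by contradiction: suppose there were an improving set $X\sub\opt\sm\f$ with $|X|\le c_\eps$ and $|N_H(X)|<|X|$. By property~(\ref{it:imp}) of Lemma~\ref{lem:conf}, there is a feasible partial orientation $F'$ with $\overline{F}'=(\f\sm N_H(X))\cup X$. I would then bound the size change and the symmetric difference. Since $N_H(X)\sub\f\sm\opt\sub\f$ and $X\sub\opt\sm\f$ is disjoint from $\f$, we get $|\overline{F}'|=|\f|-|N_H(X)|+|X|>|\f|=|F|$, so $F'$ is strictly larger. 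Moreover $\overline{F}'\Delta\f=N_H(X)\cup X$, whose size is $|N_H(X)|+|X|<2|X|\le 2c_\eps$.

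Here the mild technical point is matching constants: rule~\ref{it:rev} searches for improvements with $|\overline{F}'\Delta\overline{F}|\le c_\eps$, so to make the contradiction land I would simply run the algorithm of Section~\ref{sec:alg} with the constant set to (at least) $2c_\eps$, or equivalently restate the conclusion with the threshold $2c_\eps$; since $c_\eps$ is an arbitrary large constant absorbing $\eps$, this rescaling is harmless. Granting that, the existence of such an $F'$ means $F$ could be improved by rule~\ref{it:rev}, contradicting the hypothesis that $F$ is not improvable by that rule. Hence no improving set $X\sub B$ of size at most $c_\eps$ exists.

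With the hypothesis of Lemma~\ref{lem:ksetp} verified for $k=4$, the lemma immediately yields $|B|\le\frac{4+\eps}{2}|A|=(2+\eps)|A|$, that is, $|\opt\sm\f|\le(2+\eps)|\f\sm\opt|$, as desired. The main obstacle is not any deep argument but the careful bookkeeping in the previous paragraph: translating between the two notions of ``improving set'' (the conflict-graph/orientation rule~\ref{it:rev} on one side and the abstract bipartite improving set on the other) and confirming that an abstract improving set of size $\le c_\eps$ genuinely produces a rule~\ref{it:rev} improvement of bounded symmetric difference via Lemma~\ref{lem:conf}(\ref{it:imp}). Everything else is a direct citation of Lemma~\ref{lem:ksetp}.
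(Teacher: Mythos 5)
Your proposal is correct and follows essentially the same route as the paper's own proof: set $c_\eps$ via Lemma~\ref{lem:ksetp} with $k=4$, build $H$ from Lemma~\ref{lem:conf} with $A=F$, $B=OPT$, and convert any small bipartite improving set into a rule~\ref{it:rev} improvement via Lemma~\ref{lem:conf}(\ref{it:imp}), contradicting local optimality. In fact you are more careful than the paper on one point: the paper silently elides the factor-of-two gap between $|X|\le c_\eps$ and the symmetric-difference bound $|N_H(X)|+|X|<2|X|$ required by rule~\ref{it:rev}, which you correctly patch by taking the lemma's constant to be twice the one from Lemma~\ref{lem:ksetp}.
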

\begin{proof}
We set $c_\eps$ as in Lemma~\ref{lem:ksetp} for $k=4$, and proceed with a proof by contradiction.
Suppose that $|\opt\sm \f|> (2+\eps)|\f\sm \opt|$.
We apply Lemma~\ref{lem:conf} to $A=F$ and $B=OPT$ to obtain a bipartite graph $H$, which
we plug to Lemma~\ref{lem:ksetp}. This implies that there is a set $X\sub V$ of size at most $c_\eps$
with $|N_H(X)|<|X|$. By Lemma~\ref{lem:conf}(\ref{it:imp}), replacing $X$ with $X'$
gives a feasible orientation, and rule~\ref{it:rev} would actually be able to perform this improvement. This contradicts the assumption that $F$ is a local optimum.
\end{proof}

Finally, we combine the consequences of rule~\ref{it:rev} (Lemma~\ref{lem:2eps})
with the strengthened analysis of rule~\ref{it:3d} (Lemma~\ref{lem:ptw})
to derive the main result of this paper.

\begin{theorem}\label{thm:local}
Fix $\eps>0$. There exists a constant $c_\eps$ such that 
for any instance of \udpo and any feasible partial orientation $F$ which cannot be improved using rules~\ref{it:3d} and~\ref{it:rev},
we have $|OPT|\le (\tfrac54+\eps)|F|$, where $OPT$ is a maximum feasible partial orientation.
\end{theorem}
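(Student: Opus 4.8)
The plan is to reduce to simple instances and then play the two rules against each other. By Theorem~\ref{thm:red} it suffices to establish $|OPT|\le(\tfrac54+\eps)|F|$ when $I$ is simple, $F$ a local optimum and $OPT$ a maximum feasible partial orientation, since that reduction produces a simple instance whose local optimum realizes the same ratio $|F|/|OPT|$. Fix such a simple instance and split the common edges $\f\cap\opt$ into the set $S$ of edges oriented identically in $F$ and $OPT$ and the set $D$ of edges oriented oppositely; write $d=|D|$. Then $|F|=|S|+d+|\f\sm\opt|$ and $|OPT|=|S|+d+|\opt\sm\f|$, so in particular $|OPT|-|F|=|\opt\sm\f|-|\f\sm\opt|$.

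The first inequality comes from the custom rule. Since $F$ cannot be improved by rule~\ref{it:rev}, Lemma~\ref{lem:2eps} gives $|\opt\sm\f|\le(2+\eps)|\f\sm\opt|$, and therefore, using $|\f\sm\opt|=|F|-|S|-d\le|F|-d$,
\[
|OPT|-|F|=|\opt\sm\f|-|\f\sm\opt|\le(1+\eps)\,|\f\sm\opt|\le(1+\eps)\bigl(|F|-d\bigr),
\]
which I will call (II).

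The second inequality comes from the $3$-\sp rule together with the degree-sensitive estimate of Lemma~\ref{lem:ptw}. Passing through the reduction of Section~\ref{sec:3d}, the orientations $F$ and $OPT$ correspond to disjoint families $\F_0$ and $\mathcal{P}$ with $|\F_0|=|F|$, $|\mathcal{P}|=|OPT|$, and with $\F_0\cap\mathcal{P}$ being exactly the sets arising from $S$; let $H$ be the bipartite conflict graph between $A:=\F_0\sm\mathcal{P}$ and $B:=\mathcal{P}\sm\F_0$, so that $|B|-|A|=|OPT|-|F|$. As every set has three elements and $\mathcal{P}$ is disjoint, all degrees in $H$ are at most $3$. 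To feed $H$ into Lemma~\ref{lem:ptw} I must rule out a small, low-pathwidth improving set in $H$, and here I would invoke Observation~\ref{obs:monot} with $\F'=\F_0\cup\mathcal{P}$: for $X\sub B$ the sets of $\mathcal{P}$ are disjoint from $X$, so $N_{G(\F')}(X)=N_H(X)$ and $G(\F')[N_{G(\F')}[X]]=H[N_H[X]]$, whence the size and pathwidth conditions of rule~\ref{it:3d} and of Lemma~\ref{lem:ptw} coincide; monotonicity then transfers the nonexistence of an improving set from $\F$ to $H$. Lemma~\ref{lem:ptw} now yields $|B|\le(1+\eps)|A|+\tfrac13|A^{\ge3}|$, where $A^{\ge3}=\{a\in A:\deg_H(a)\ge3\}$. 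The crucial structural observation is that $|A^{\ge3}|\le d$: the degree in $H$ of a set $s=\{u^+,v^-,e\}\in A$ equals the number of its three elements covered by $\mathcal{P}$, and if $e\in\f\sm\opt$ then $e$ is unused in $OPT$, so $s$ has degree at most $2$; only the sets coming from $D$-edges (where $e\in\opt$) can reach degree $3$. Combining this with $|A|=|F|-|S|\le|F|$ and $|B|-|A|=|OPT|-|F|$ gives
\[
|OPT|-|F|\le\eps|F|+\tfrac13 d,
\]
which I will call (I).

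Finally I would eliminate $d$ by adding three times (I) to (II):
\[
4\bigl(|OPT|-|F|\bigr)\le\bigl(3\eps+1+\eps\bigr)|F|+d-(1+\eps)d=(1+4\eps)|F|-\eps d\le(1+4\eps)|F|,
\]
so that $|OPT|-|F|\le(\tfrac14+\eps)|F|$, that is, $|OPT|\le(\tfrac54+\eps)|F|$; the constant $c_\eps$ is taken large enough to meet the requirements of Lemmas~\ref{lem:2eps} and~\ref{lem:ptw} simultaneously. I expect the main obstacle to lie in the second ingredient: both the structural claim $|A^{\ge3}|\le d$ and the use of Observation~\ref{obs:monot} that makes Lemma~\ref{lem:ptw} applicable to $H$ rather than to the full conflict graph demand care, whereas the closing arithmetic is entirely routine.
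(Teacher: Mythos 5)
Your proof is correct and follows essentially the same route as the paper's: reduction to simple instances via Theorem~\ref{thm:red}, Lemma~\ref{lem:ptw} applied to the bipartite conflict graph between the two solutions in the underlying 3-\textsc{set packing} instance with the key structural observation that degree-$3$ vertices on the $F$-side arise only from edges that $OPT$ orients oppositely, combined with Lemma~\ref{lem:2eps} for rule~\ref{it:rev}. Your only deviations are cosmetic or strengthening refinements within that approach: you close with the linear combination $3\cdot(\mathrm{I})+(\mathrm{II})$ where the paper instead splits into cases at $|\overline{OPT}\cap\overline{F}|=\tfrac34|F|$, you bound the degree-$3$ count by $d=|D|$ where the paper uses the weaker bound $|\overline{OPT}\cap\overline{F}|$, and you make explicit (via Observation~\ref{obs:monot} applied to $\F'=\F_0\cup\mathcal{P}$) the transfer of local optimality from the full conflict graph to $H$, a step the paper leaves implicit.
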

\begin{proof}
By Theorem~\ref{thm:red}, it suffices to prove the claim for simple instances only.
Let $C=OPT\cap F$. Note that $F\sm C$ and $OPT\sm C$ induce a bipartite subgraph $H=(F\sm C, OPT\sm C, E(H))$ of the conflict graph in the underlying instance of 3-\textsc{set packing}. 
Clearly, the degrees in $H$ are bounded by 3.
Moreover, by construction of the reduction, if $\deg_H(e)=3$ for some $e\in F$,
then $e^R\in OPT$, i.e., $|\{e\in F : \deg_H(e)=3\}|\le |\overline{OPT}\cap \overline{F}|$

We set $c_\eps$ large enough for Lemmas~\ref{lem:ptw} and~\ref{lem:2eps} to be applicable.
The former lets us conclude that 
\begin{multline*}
|OPT|=|C|+|OPT\sm C| \le |C|+(1+\eps)|F\sm C|+\tfrac13|\{e\in F\sm C: \deg_H(e) = 3\}\le \\ (1+\eps)|F|+\tfrac13 |\overline{OPT}\cap \overline{F}|.
\end{multline*}
If $|\overline{OPT}\cap \overline{F}|\le \frac{3}{4} |F|$, this already concludes the proof.
Otherwise $|\overline{F}\sm \overline{OPT}|\le \frac14|F|$ and we apply Lemma~\ref{lem:2eps} to get
$$|\overline{OPT}\sm \overline{F}| \le (2+\eps)|\overline{F} \sm \overline{OPT}|,$$
and consequently we obtain
\begin{multline*}
|OPT| = |\overline{OPT}\sm \overline{F}|+|\overline{OPT}\cap \overline{F}| \le (2+\eps)|\overline{F}\sm \overline{OPT}|+|\overline{OPT}\cap \overline{F}| = (1+\eps)|\overline{F}\sm \overline{OPT}| +|F|\le \\ \tfrac{5+\eps}{4}|F|\le (\tfrac{5}{4}+\eps)|F|,
\end{multline*}
which concludes the proof.
\end{proof}

\subsection*{Acknowledgement}
We thank Harold N. Gabow for pointing us to the \udpo problem.

\bibliographystyle{plain}
\bibliography{orientation}
\end{document}